\documentclass[preprint,12pt,authoryear]{elsarticle}

\usepackage{amssymb}
\usepackage{amsmath}
\usepackage{amsthm}
\usepackage{graphicx}
\usepackage{booktabs}
\usepackage[font=small,labelfont=bf]{caption}
\usepackage{url}

\newtheorem{theorem}{Theorem}
\newtheorem{proposition}{Proposition}

\theoremstyle{remark}
\newtheorem{remark}{Remark}

\journal{Spatial Statistics}

\begin{document}

\begin{frontmatter}

\title{Separating Spatial and Clinical Risk with Node-Splitting SVM Survival Trees}

\author[bsu]{Drew Lazar\corref{cor1}}
\ead{dmlazar@bsu.edu}
\cortext[cor1]{Corresponding author.}

\author[uofl]{Aye Aye Maung}
\ead{ayeaye.maung@louisville.edu}

\address[bsu]{Department of Mathematical Sciences, Ball State University,
Robert Bell Building, Room 465, Muncie, IN 47306, USA}

\address[uofl]{Department of Bioinformatics and Biostatistics, University of
Louisville, 485 E. Gray St., Louisville, KY 40202, USA}

\begin{abstract}
Recovering geographic variation in survival requires separating spatial risk from
patients' clinical characteristics, a problem complicated by prognostic covariates
that are themselves spatially structured. We develop a
nonparametric two-stage method for this separation. A clinical survival tree fit
to the covariates alone supplies leaf Nelson-Aalen cumulative hazard residuals,
transferring the censored survival structure to a clinically adjusted scale
without imposing a functional form on the clinical hazard, and a second tree fit
to these residuals on the coordinates recovers the spatial structure. Both stages are
kernel dipole-splitting survival trees, so the resulting spatial risk map is
piecewise constant, with sharp, possibly curved boundaries. We establish when the residuals recover the
spatial signal: under a multiplicative frailty and an exogeneity
condition, they are free of the clinical covariates given location and
stochastically ordered by the frailty. An expansion of the frailty Laplace exponent quantifies their approximate
exponentiality and identifies the leading remainder. These results assume consistency of the first stage rather
than a model class, so the construction extends to other survival estimators. On the
LeukSurv leukemia data the method agrees with a Bayesian Gaussian random field
frailty about where risk is elevated while resolving sharp adjacencies the smooth
surface averages away, and an unadjusted spatial analysis misattributes
clinical variation to location. Simulations with known zones,
including a sweep through graded violations of exogeneity, locate the point at
which the two contributions cease to be separately identifiable, with the smooth
benchmark degrading in parallel as that point is approached.
\end{abstract}

\begin{keyword}
Survival trees \sep Kernel methods \sep Spatial survival analysis \sep Cumulative hazard residuals \sep Identifiability \sep Spatial confounding \sep Frailty models
\end{keyword}

\end{frontmatter}

\section{Introduction}
\label{sec:intro}

Survival from serious disease varies geographically. Cancer registries in many
countries record regional differences that persist after adjustment for patients'
clinical characteristics, and locating them is a standard object of epidemiological
interest. A region of elevated risk unexplained by its case mix points to something
about the place itself \citep{clayton1993, woods2006}. The statistical problem is to
separate the geographic component of risk from the clinical component, using data in
which survival times are right-censored and in which location and clinical
characteristics are themselves related.

The established approach adds a spatially structured frailty to the linear predictor
of a proportional-hazards model. \citet{henderson2002} introduced this framework in
the setting that motivates the present paper, modeling survival among acute myeloid
leukemia patients in northwest England with a multiplicative gamma frailty carrying a
Mat\'ern spatial correlation, and related developments followed for point-referenced
data \citep{li2002} and for areal data through conditionally autoregressive priors
\citep{banerjee2003}. The modern implementation of this family is the
\texttt{spBayesSurv} package of \citet{zhou2020spbayessurv}, which places a flexible
transformed Bernstein polynomial prior on the baseline and admits proportional-odds
and accelerated-failure-time specifications alongside proportional hazards. Across
all of these models the spatial effect is a smooth, distance-decaying surface, so the
estimated risk varies continuously over the study region.

That smoothness is a modeling assumption rather than an empirical finding, and there
are reasons to doubt it where spatial survival analysis is applied. Health care is
delivered through discrete administrative units with distinct catchment areas and
protocols \citep{goddard2001}, and environmental exposures change as abruptly, across
a river or a former industrial boundary \citep{bakka2019}. Where the underlying risk
changes sharply, a correlation-based frailty cannot represent the change. The
Mat\'ern structure forces neighboring locations toward common values, so a genuine
discontinuity is returned as a gradient and two adjacent regions of opposite risk are
averaged into an intermediate value. The proportional-hazards assumption on the
clinical part of the model is a second restriction. It leaves the shared baseline
hazard unconstrained but requires each covariate's hazard ratio to be constant over
follow-up, and it fails most visibly when a covariate effect reverses direction, as
an age effect can when early mortality and long-term prognosis are governed by
different mechanisms.

We take a different approach to the spatial component, using survival trees. A tree
partitions the covariate space into regions and fits a separate nonparametric
survival estimate in each, producing a piecewise-constant risk surface with sharp
boundaries, which is precisely the representation a smooth frailty cannot supply. We
build on the kernel dipole-splitting survival tree of \citet{maung2025}, in which
splits are chosen by a margin-based criterion constructed from pairs of observations,
or \emph{dipoles}, graded by their survival-time differences and censoring pattern.
Because the criterion is a regularized hinge-loss objective, its dual depends on the
covariates only through inner products, and kernelization yields curved separating
surfaces while retaining the discrete leaf structure of a tree.

Fitting such a tree to raw survival times, however, does not answer the question of
interest. Clinical covariates that are prognostic and also spatially organized
(area-level deprivation being the canonical example, \citealt{townsend1988,
carstairs1989}) induce geographic variation in survival that is clinical in
origin, and a tree given only coordinates will attribute it to location
\citep{reich2006}. We therefore
introduce a two-stage procedure. A first tree is fit to the clinical covariates alone;
from its leaves we compute Nelson-Aalen cumulative hazard residuals, which transfer
the censored survival structure onto a clinically adjusted scale without imposing a
functional form on the clinical hazard; a second tree is then fit to these residuals
using only the coordinates. The result is a discrete spatial risk map read as the
spatial effect net of clinical adjustment. The procedure shares the two-stage logic of
regression kriging \citep{hengl2007}, but preserves censoring and partitions where
kriging smooths.

The contributions of the paper are as follows. We introduce two refinements to the
kernel dipole tree of \citet{maung2025}, recomputing the pure-or-mixed dipole
classification and the hinge-loss margin locally at each node rather than fixing them
at the root. We formulate the two-stage residual procedure and establish its
theoretical basis. Under a multiplicative spatial frailty and an exogeneity
condition on the conditional distribution of the frailty given the covariates, the
residuals are conditionally free of the clinical covariates and stochastically
ordered by the frailty (Theorem~\ref{thm:residuals}). Under the weaker condition
that only the mean frailty is balanced, an expansion of the frailty Laplace
exponent identifies the leading covariate-dependent remainder
(Proposition~\ref{prop:expansion}).
Both results assume conditions on the data-generating process and consistency of the
clinical stage rather than any particular model at either stage, so the construction
generalizes to other survival estimators. We characterize what the method estimates
when the condition fails, showing that the resulting contrast is well defined but that
the spatial and clinical contributions cease to be separately identifiable in the
strongly confounded regime, a limitation of the estimand shared by smoothing-based
methods rather than a defect of the procedure. Finally, we apply the method to the
LeukSurv data of \citet{henderson2002} alongside a \texttt{spBayesSurv} benchmark, and
report simulations in which the true spatial partition is known, including a sweep
through graded violations of mean balance that locates the identifiability limit
empirically.

The remainder of the paper is organized as follows. Section~\ref{sec:background}
reviews survival trees, dipole splitting, and spatial frailty models.
Section~\ref{sec:method} summarizes the kernel dipole splitting criterion and presents
our two refinements. Section~\ref{sec:threestage} develops the two-stage procedure and
its theory. Section~\ref{sec:leuksurv} applies the method to the LeukSurv data
alongside the benchmark. Section~\ref{sec:simulations} reports the simulation studies.
Section~\ref{sec:discussion} summarizes our contributions and discusses directions
for future work.

\section{Background}
\label{sec:background}

\subsection{Survival Trees and Dipole Splitting}
\label{sec:trees}

Tree-based methods partition the covariate space into regions of homogeneous risk, offering an interpretable alternative to parametric and semiparametric regression models. Survival trees adapt recursive partitioning to censored outcomes, and a substantial literature has developed splitting rules suited to survival data, including the exponential log-likelihood splits of \citet{ciampi1986}, the relative-risk trees of \citet{leblanc1992}, and the log-rank splitting of \citet{leblanc1993}. More recent work has pursued globally optimal survival trees by mixed-integer optimization \citep{bertsimas2022}, which assumes proportional hazards at the splits. Unlike smooth regression models, trees produce piecewise-constant risk surfaces with sharp boundaries, which is precisely the property we exploit for detecting abrupt spatial structure.

Dipole-based splitting takes a different view of how a split should be chosen. Rather than maximizing a separation statistic between the two child nodes, a dipole criterion is built from pairs of observations: a \emph{dipole} is a pair of cases, and a split is favored when it separates pairs that ought to be separated (mixed dipoles, joining cases with dissimilar outcomes) while keeping together pairs that are similar (pure dipoles). This pairwise formulation was developed for classification and regression trees by \citet{bobrowski1996} and \citet{kretowska2004}, and extended to censored survival outcomes by \citet{kretowska2018}, where the classification of a dipole as pure, mixed, or neither is defined through the survival times and censoring indicators. A separate question is \emph{orientation}, which of a paired set of penalty functions each dipole is assigned. This is not rigorously specified in the earlier work and is defined precisely in \citet{maung2025}, where it is what secures the monotone descent of the splitting algorithm. The dipole criterion can be written as a margin objective, which makes it amenable to support-vector-machine optimization and, as in our prior work \citep{maung2025}, to kernelization. Kernel dipole-splitting survival trees replace the axis-aligned splits of classical trees with margin-based separating surfaces in a reproducing kernel Hilbert space, allowing curved, nonlinear partition boundaries while retaining the discrete, piecewise-constant risk structure of a tree. We summarize the criterion and our orientation procedure in Section~\ref{sec:method}.

\subsection{Spatial Survival Models}
\label{sec:spatial}

Spatial survival analysis incorporates geographic information into time-to-event models, typically through a spatially structured frailty added to the linear predictor. The foundational work of \citet{henderson2002} introduced this approach in the context that motivates our application, modeling spatial variation in the survival of acute myeloid leukemia patients in northwest England, using a multiplicative gamma frailty with a Mat\'ern spatial correlation on a proportional-hazards model. Related developments include the spatial frailty proportional-hazards models of \citet{li2002} and the areal frailty models of \citet{banerjee2003}, the latter using conditionally autoregressive priors for region-level data. The framework remains active, with recent geostatistical extensions accommodating incomplete spatially correlated data \citep{allotey2023}. Across these models the spatial effect enters as a smooth, distance-decaying surface. Nearby locations are assigned similar frailties by construction, so the estimated spatial risk varies continuously over the study region.

For our comparisons we use the \texttt{spBayesSurv} package of \citet{zhou2020spbayessurv}, which provides the current standard implementation of Bayesian spatial survival models in R. The package generalizes the Henderson framework along several axes: it supports proportional-hazards, proportional-odds, and accelerated-failure-time models; it models the baseline with a flexible transformed Bernstein polynomial prior, in place of the profiled Breslow estimator of \citet{henderson2002}; and it accommodates both georeferenced (Gaussian random field) and areal (conditionally autoregressive) spatial structure, with scalable Markov chain Monte Carlo via a full-scale covariance approximation. We adopt as our primary benchmark the proportional-hazards model with a Gaussian random field frailty fit on the exact patient locations, which is the closest modern analogue of the point-level model of \citet{henderson2002}. This choice is deliberate. Even with the additional flexibility \texttt{spBayesSurv} provides over the original Henderson model, its spatial component remains a smooth Gaussian process. The frailty surface cannot represent sharp boundaries, so a smooth frailty model blurs the transition wherever the underlying spatial risk changes abruptly, whether across an administrative border, a catchment edge, or an environmental discontinuity. This smoothness limitation, shared by all correlation-based spatial frailty models, is the gap our discrete tree-based method is designed to fill, and the contrast between a smooth frailty surface and a piecewise-constant tree partition is the organizing theme of our empirical comparisons.

\section{Kernel-Based Dipole Survival Trees}
\label{sec:method}

\subsection{$L_2$-Regularized Dipole Splitting Criterion}
\label{sec:criterion}

We summarize the kernel dipole-splitting survival tree of \citet{maung2025}, which our spatial procedure builds on, and then, in Section~\ref{sec:local}, describe two refinements introduced in the present work.

A node containing observations $(x_j, t_j, \delta_j)$ is split by a separating surface chosen to respect the survival structure of \emph{dipoles}, pairs of observations. A pair $(j,k)$ is \emph{right-comparable} when the earlier of the two times is an observed event, and the comparable pairs are graded by the magnitude of their time difference $|t_j - t_k|$: a \emph{pure} dipole joins two observations with similar survival times, which a split should keep together, while a \emph{mixed} dipole joins two with dissimilar survival times, which a split should separate. Concretely, fixing quantile levels $0 < \zeta_1 < \zeta_2 < 1$ of the distribution of time differences over right-comparable pairs, a pair of two observed events whose time difference falls below the $\zeta_1$ quantile is pure, a right-comparable pair whose time difference reaches or exceeds the $\zeta_2$ quantile is mixed, and every other pair is neither.

For a candidate hyperplane $\mathbf{v} = (w_0, w)$ acting on the augmented covariate $z_j = (1, x_j)$, each observation contributes a hinge loss relative to a margin $\varepsilon$,
\[
\varphi_j^+(\mathbf{v}) = \max\{0, \varepsilon - \mathbf{v} \cdot z_j\}, \qquad \varphi_j^-(\mathbf{v}) = \max\{0, \varepsilon + \mathbf{v} \cdot z_j\},
\]
and the dipole penalties combine these so that pure dipoles are penalized for being separated and mixed dipoles for being kept together. Which of a dipole's pair of penalties applies is fixed by its \emph{orientation} with respect to a reference vector $\mathbf{v}_i$, given by the sign of $\mathbf{v}_i \cdot (z_j + z_k)$ for a pure dipole and of $\mathbf{v}_i \cdot (z_j - z_k)$ for a mixed one. Summing the penalties so oriented over the dipole index sets gives the $\mathbf{v}_i$-oriented dipole splitting function $\Psi_{\mathbf{v}_i}(\mathbf{v})$ of \citet{maung2025}. Adding a ridge term yields the $L_2$-regularized splitting function
\[
\begin{aligned}
\Psi_{\mathbf{v}_i}(\mathbf{v};\kappa) = \tfrac{1}{2}\lVert w \rVert^2 + \kappa \sum_j \Bigl[ &\beta_{j,\mathbf{v}_i}^+ \max\{0, \varepsilon_j - \mathbf{v}\cdot z_j\} \\
&+ \beta_{j,\mathbf{v}_i}^- \max\{0, \varepsilon_j + \mathbf{v}\cdot z_j\} \Bigr],
\end{aligned}
\]
where the coefficients $\beta_{j,\mathbf{v}_i}^{\pm} \ge 0$ aggregate the dipole weights $\alpha_{jk}$ at observation $j$ under the orientation induced by a reference vector $\mathbf{v}_i$, and $\kappa > 0$ controls regularization. Minimizing $\Psi_{\mathbf{v}_i}(\cdot\,;\kappa)$ is the convex quadratic program
\[
\begin{aligned}
\min_{w_0, w, \xi} \ & \tfrac{1}{2}\lVert w \rVert^2 + \kappa \sum_j (\xi_j^+ + \xi_j^-) \\
\text{s.t.} \ \ & \beta_{j,\mathbf{v}_i}^+(\varepsilon_j - w\cdot x_j - w_0) \le \xi_j^+, \\
& \beta_{j,\mathbf{v}_i}^-(\varepsilon_j + w\cdot x_j + w_0) \le \xi_j^-, \\
& \xi_j^+ \ge 0, \quad \xi_j^- \ge 0, \qquad j = 1, \dots, n.
\end{aligned}
\]
This program satisfies strong duality, and its dual depends on the covariates only through inner products $x_j \cdot x_k$. Replacing these with a Mercer kernel, $K(x_j, x_k) = \langle \phi(x_j), \phi(x_k) \rangle$ for a feature map $\phi : \mathbb{R}^p \to \mathcal{V}$ into an inner product space, replaces the hyperplane vector $\mathbf{v}$ by a surface vector $\boldsymbol{\upsilon} = (\omega_0, \omega) \in \mathbb{R} \times \mathcal{V}$ and $\Psi_{\mathbf{v}_i}$ by its feature-expanded counterpart $\widetilde{\Psi}_{\boldsymbol{\upsilon}_i}$, giving nonlinear splitting surfaces $S(\boldsymbol{\upsilon}) = \{x : \omega_0 + \langle \omega, \phi(x)\rangle = 0\}$ in the original covariate space. We use linear kernels (recovering oblique splits) for the clinical stage and, for the spatial stage, a mixture of Gaussian and polynomial kernels,
\[
K_{\text{spatial}}(u, u') = \lambda_g \exp\!\left(-\frac{\lVert u - u'\rVert^2}{2\sigma^2}\right) + \lambda_p (u\cdot u' + c)^d,
\]
so that the spatial splits can capture both smooth localized structure (Gaussian component) and low-order polynomial trends (polynomial component) in the geographic coordinates.

\subsection{Iterative Orientation and Optimization Algorithm}
\label{sec:algorithm}

The orientation of a dipole, as defined in Section~\ref{sec:criterion}, is relative to a reference vector rather than intrinsic to the dipole, so the criterion is optimized by alternating between orientation and minimization, each solution serving as the reference for the next. Given a current vector $\boldsymbol{\upsilon}_i$, the dipoles are oriented with respect to $\boldsymbol{\upsilon}_i$ to form $\widetilde{\Psi}_{\boldsymbol{\upsilon}_i}(\cdot\,;\kappa)$, the quadratic program is solved to obtain an improved vector, and the dipoles are reoriented with respect to that solution. The process repeats until the objective stabilizes. \citet{maung2025} show that this produces a monotonically nonincreasing, nonnegative sequence of objective values and therefore converges. In the kernelized form the reorientation is carried out directly from the dual solution, since the discriminant $\langle \omega, \phi(x)\rangle = \sum_j (\mu_j^+ - \mu_j^-) K(x_j, x)$ is available without explicit access to the feature map.

\subsection{Tree Growing and Pruning}
\label{sec:growing}

Trees are grown by recursively applying the splitting algorithm. A node becomes terminal when a split would leave a child with fewer than a fixed number of observations, or when the node contains only pure dipoles. A nonparametric survival estimate is fit at each leaf from the observations assigned to it: a Kaplan-Meier estimate when a leaf survival curve is needed for prediction, and the corresponding Nelson-Aalen cumulative hazard when the leaf is used to form the cumulative hazard residuals of Section~\ref{sec:threestage}.

After a tree is grown it is pruned by the split-complexity method of \citet{leblanc1993}. Branches are removed to optimize $G_\alpha(T) = G(T) - \alpha|S|$, where $G(T)$ is the sum of log-rank split statistics over the internal nodes, $|S|$ is the number of internal nodes, and $\alpha$ trades fit against tree size. The complexity parameter is selected by bootstrap validation.

\subsection{Local Recalibration of the Criterion}
\label{sec:local}

Two ingredients of the criterion are fixed by quantiles of pairwise distributions over the training sample, and \citet{kretowska2018} and \citet{maung2025} compute both at the root. Since those distributions concentrate as the tree deepens, a root value grows mismatched to the node it is applied to. We recompute both at every non-terminal node.

The first is the pure-or-mixed classification of the dipoles, whose thresholds are the $\zeta_1$ and $\zeta_2$ quantiles of the time-difference distribution. Rather than reuse labels computed once for every pair in the full sample, we rebuild that distribution from the observations in the node and relabel the pairs. The quantile levels are unchanged, the thresholds they pick out are not.

The second is the margin $\varepsilon$, a low quantile of the pairwise distances in the feature space $\mathcal{V}$, obtained from the Gram matrix as $\lVert \phi(x_j) - \phi(x_k) \rVert = \bigl(K(x_j,x_j) + K(x_k,x_k) - 2K(x_j,x_k)\bigr)^{1/2}$. A root margin can exceed the entire local diameter at depth, leaving every hinge in its linear regime and the criterion unable to discriminate among splits. We recompute it locally, floored at a proportion of the root value,
\[
\varepsilon_{\text{node}} = \max\!\Bigl( \mathrm{quantile}_{q}(\text{nonzero local feature-space distances}), \ \alpha\, \varepsilon_{\text{root}} \Bigr),
\]
with $q$ a low quantile level and $\alpha \in [0.1, 0.3]$. Both are fixed rules applied to a local subset rather than tuned parameters.
A scale calibrated at the root is
mismatched at depth in any application, but in the spatial stage this is clearest,
the covariates there being coordinates.

\section{Two-Stage Residual Approach for Spatial Effects}
\label{sec:threestage}

\subsection{The Two-Stage Procedure}

Suppose we have $n$ patients, each with clinical covariates $x_i \in \mathbb{R}^p$, spatial location $s_i \in \mathbb{R}^2$, observed time $t_i$, and censoring indicator $\delta_i$. Our goal is to detect spatial variation in survival after adjusting for clinical effects, without imposing proportional hazards or smoothness assumptions on the spatial component. The method has two modeling stages, a \emph{clinical} stage and a \emph{spatial} stage, linked by a cumulative hazard residual transformation.

\textbf{Clinical stage.} We first fit a kernel survival tree to $\{(x_i, t_i, \delta_i)\}_{i=1}^n$. We use a linear kernel since clinical effects are typically well-approximated by linear combinations of standard covariates such as age, sex, and laboratory values. The tree partitions patients into terminal nodes (leaves), each containing patients with similar clinical risk profiles. At each leaf $\ell$, we fit a Nelson-Aalen cumulative hazard estimate $\hat{\Lambda}_\ell(t) = \sum_{t_j \le t} d_j / n_j$ from the patients assigned to that leaf, where $d_j$ and $n_j$ are the number of events and the number at risk, respectively, at the ordered event times $t_j$ within the leaf.

For each patient $i$ assigned to leaf $\ell(i)$, we then compute the cumulative hazard residual
\begin{equation}
\label{eq:residuals}
e_i = \hat{\Lambda}_{\ell(i)}(t_i),
\end{equation}
the leaf Nelson-Aalen cumulative hazard evaluated at the patient's observed time. We use the Nelson-Aalen estimator rather than the transformation $-\log \hat{S}_{\ell(i)}(t_i)$ of a leaf Kaplan-Meier survival estimate. The two agree to first order when the number at risk is large \citep[Section 4.2]{kleinmoeschberger2003} but the latter diverges when the leaf Kaplan-Meier estimate reaches zero, which occurs when $t_i$ is the largest observed time in a leaf and corresponds to an event. The Nelson-Aalen estimator is finite in this case and is the natural direct estimator of the cumulative hazard, the quantity the residual is intended to capture. On the LeukSurv data of Section~\ref{sec:leuksurv}, the Kaplan-Meier transformation produced divergent residuals for $2$ of $1{,}043$ patients, whereas the Nelson-Aalen residuals are bounded. The two residual definitions have rank correlation $0.9998$ across all patients, so the choice affects only these boundary cases and leaves the spatial analysis otherwise unchanged.
The residual $e_i$ is the cumulative hazard at time $t_i$ as estimated from clinically similar patients. The censoring indicator $\delta_i$ carries over to the residual unchanged. If patient $i$ was censored, the residual $e_i$ is treated as a right-censored observation. This residual transformation is the bridge between the two stages. It transfers the censored survival structure to a clinically adjusted scale without invoking a proportional hazards assumption.

\textbf{Spatial stage.} We then fit a second kernel survival tree to $\{(s_i, e_i, \delta_i)\}_{i=1}^n$, using only the spatial coordinates as covariates and the residuals $e_i$ as the survival outcome. For the spatial tree, we use a mixture of Gaussian and polynomial kernels to capture both arbitrary nonlinear boundaries and structured low-order spatial effects. The resulting tree partitions the spatial domain into regions whose patients have systematically different residual survival, after clinical effects have been removed.

The output is a spatial risk map, each patient receiving a predicted residual survival time from the spatial tree that can be read as a spatial relative risk net of clinical adjustment.

\subsection{Theoretical Justification}
\label{sec:theory}

The two-stage procedure can be motivated by the classical Cox-Snell residual framework \citep{coxsnell1968} together with a multiplicative frailty assumption analogous to that of \citet{henderson2002}.

Suppose the true conditional hazard has the multiplicative form
\begin{equation}
\label{eq:frailty_model}
\lambda(t \mid x, s) = z(s) \cdot \lambda_{\text{clin}}(t \mid x),
\end{equation}
where $z(s) > 0$ is a positive spatial frailty representing the relative risk at location $s$, and $\lambda_{\text{clin}}(t \mid x)$ is the clinical-only conditional hazard, an arbitrary nonnegative function of $t$ and $x$. The only structural assumption is the multiplicative frailty itself: $z(s)$ scales the hazard proportionally across time, while $\lambda_{\text{clin}}$ is left unrestricted. This is the hazard structure standard in spatial frailty models \citep{henderson2002, banerjee2003}, but with both factors nonparametric. Let $\Lambda_{\text{clin}}(t \mid x) = \int_0^t \lambda_{\text{clin}}(u \mid x)\, du$ denote the corresponding clinical-only cumulative hazard. The survival function for a patient with covariates $x$ at location $s$ is then
\begin{equation}
\label{eq:strue}
S_{\text{true}}(t \mid x, s) = \exp\bigl[-z(s)\,\Lambda_{\text{clin}}(t \mid x)\bigr].
\end{equation}

The clinical tree observes covariates but not location, so pooling patients who share covariate value $x$ pools different frailties. What it can identify is therefore \eqref{eq:strue} averaged over $S \mid X = x$, and by the law of total probability that averaging is on the survival-probability scale,
\[
S_{\text{marg}}(t \mid x) = \mathbb{E}_{S \mid X = x}\!\left[\, S_{\text{true}}(t \mid x, S)\,\right] = \mathbb{E}_{S \mid X = x}\!\left[\exp\bigl(-z(S)\,\Lambda_{\text{clin}}(t \mid x)\bigr)\right].
\]
Converting this marginal survival function to a cumulative hazard through $\Lambda = -\log S$ gives the quantity the clinical tree actually targets,
\begin{equation}
\label{eq:marg}
\Lambda_{\text{marg}}(t \mid x) = -\log S_{\text{marg}}(t \mid x) = -\log \mathbb{E}_{S \mid X = x}\!\left[\exp\bigl(-z(S)\,\Lambda_{\text{clin}}(t \mid x)\bigr)\right].
\end{equation}
The expectation is taken inside the logarithm, so $\Lambda_{\text{marg}}$ is not the average frailty times $\Lambda_{\text{clin}}$ but a nonlinear function of it. This multiplicative form, the leading term of \eqref{eq:expansion}, survives only to first order, as Proposition~\ref{prop:expansion} makes precise.

\begin{theorem}
\label{thm:residuals}
Assume the multiplicative frailty model in \eqref{eq:frailty_model}, and suppose the clinical tree consistently estimates the cumulative hazard it targets: for each patient $i$ and each $t \in [0,\tau]$,
\begin{equation}
\label{eq:cons}
\hat{\Lambda}_{\ell(i)}(t) \xrightarrow{\ p\ } \Lambda_{\text{marg}}(t \mid x_i),
\end{equation}
with $\Lambda_{\text{marg}}$ as in \eqref{eq:marg} and $\tau$ the administrative end of follow-up, so that the observed time satisfies $T_i \le \tau$. Suppose further the \emph{exogeneity condition} that the conditional distribution of the spatial frailty $z(S)$ given $X = x$ does not depend on $x$. Then, conditional on the location $s_i$, the limiting distribution of the residual $e_i$ from \eqref{eq:residuals} does not depend on the clinical covariates $x_i$, and is stochastically ordered by the frailty: a patient at a higher-risk location (larger $z(s_i)$) has a stochastically smaller residual than one at a lower-risk location. Censoring is carried over unchanged.
\end{theorem}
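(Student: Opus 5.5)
Under exogeneity, the mixing distribution of $z(S)$ given $X=x$ is a fixed law $G$ that does not depend on $x$. Write $\phi_G(u) = -\log \int e^{-zu}\,dG(z)$ for its Laplace exponent. Then \eqref{eq:marg} becomes
\[
\Lambda_{\text{marg}}(t \mid x) = \phi_G\bigl(\Lambda_{\text{clin}}(t \mid x)\bigr),
\]
so the clinical target is a fixed, $x$-free transform of the clinical cumulative hazard. The function $\phi_G$ is continuous and strictly increasing on $[0,\infty)$, since its derivative is the $G$-tilted mean of $z$, which is positive because $z>0$.

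The first step is to pass to the limit. By \eqref{eq:cons}, and because the observed time satisfies $T_i \le \tau$, the residual $e_i = \hat\Lambda_{\ell(i)}(T_i)$ converges in probability to $\phi_G(\Lambda_{\text{clin}}(T_i \mid x_i))$. Pointwise consistency at fixed $t$ is not quite enough when the evaluation point $T_i$ is random. I would upgrade it to uniform consistency on $[0,\tau]$ by a Pólya/Glivenko–Cantelli argument, using monotonicity of the Nelson–Aalen estimate and continuity of the limit; this is the step I expect to need the most care. Alternatively, one can condition on $T_i$ and use independence of patient $i$'s contribution in the leaf asymptotically. The limiting residual is therefore $e_i^\ast = \phi_G(\Lambda_{\text{clin}}(T_i \mid x_i))$, where $T_i$ is the true event time (or its censored version).

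Next, conditional on $s_i$ and $x_i$, the event time has survival function given by \eqref{eq:strue}. By the Cox–Snell probability integral transform, $\Lambda_{\text{clin}}(T_i \mid x_i)$ is exponential with rate $z(s_i)$, assuming $\Lambda_{\text{clin}}$ is continuous in $t$. Hence
\[
P\bigl(e_i^\ast > r \mid x_i, s_i\bigr) = \exp\bigl[-z(s_i)\,\phi_G^{-1}(r)\bigr],
\]
which is free of $x_i$. This gives the first claim. The expression is decreasing in $z(s_i)$ for every $r$, so a larger frailty yields a stochastically smaller residual, which is the ordering claim.

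For censoring, the observed residual is $\phi_G(\Lambda_{\text{clin}}(\min(T,C)\mid x))$. Since $\phi_G\circ\Lambda_{\text{clin}}(\cdot\mid x)$ is a nondecreasing map applied to both $T$ and $C$, the censoring indicator is preserved. Under independent censoring the residual is a right-censored version of $e_i^\ast$ with the same indicator $\delta_i$. I would state the $x$-freeness claim for the latent (uncensored) residual, noting that the censoring law on the residual scale may still depend on $x$ through $C$.
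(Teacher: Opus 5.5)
Your proof is correct and follows the paper's route: write $\Lambda_{\text{marg}}(t \mid x)$ as a strictly increasing Laplace-exponent transform of $\Lambda_{\text{clin}}(t \mid x)$, a transform that is free of $x$ under exogeneity, upgrade pointwise to uniform consistency by P\'olya's argument so that $e_i - \Lambda_{\text{marg}}(T_i \mid x_i) \to 0$ in probability, and then use the probability integral transform to get $\Lambda_{\text{clin}}(T_i \mid x_i) \sim \text{Exp}(z(s_i))$. Your explicit limiting survival function $\exp[-z(s_i)\,\phi_G^{-1}(r)]$, and your remark that the $x$-freeness strictly concerns the latent residual because the residual-scale censoring value $\phi_G(\Lambda_{\text{clin}}(C \mid x_i))$ generally depends on $x_i$, are modest sharpenings of points the paper leaves implicit.
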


\begin{proof}
By the probability integral transform applied to the true conditional cumulative hazard \citep[Section 2.3]{kleinmoeschberger2003}, conditional on $(x_i, s_i)$ the random variable formed from the latent event time $T_i^{*}$,
\[
\Lambda_{\text{true}}(T_i^{*} \mid x_i, s_i) = -\log S_{\text{true}}(T_i^{*} \mid x_i, s_i) = z(s_i)\, \Lambda_{\text{clin}}(T_i^{*} \mid x_i),
\]
with $S_{\text{true}}$ as in \eqref{eq:strue}, follows a unit exponential distribution. Dividing the unit exponential by the positive constant $z(s_i)$,
\begin{equation}
\label{eq:expclin}
\Lambda_{\text{clin}}(T_i^{*} \mid x_i) \sim \text{Exp}\bigl(z(s_i)\bigr),
\end{equation}
a distribution that depends on $s_i$ only and not on $x_i$.

Now define $\psi_x(a) = -\log \mathbb{E}_{S \mid X = x}[\exp(-z(S)\,a)]$, the Laplace exponent of the conditional frailty distribution, so that $\Lambda_{\text{marg}}(t \mid x) = \psi_x(\Lambda_{\text{clin}}(t \mid x))$ by \eqref{eq:marg}. The map $\psi_x$ is strictly increasing, since $\psi_x'(a) = \mathbb{E}[z(S)e^{-z(S)a} \mid x] / \mathbb{E}[e^{-z(S)a} \mid x] > 0$. The residual evaluates $\hat{\Lambda}_{\ell(i)}$ at the \emph{random} time $T_i$, which is not independent of it, so \eqref{eq:cons} does not apply to $e_i$ directly. It upgrades, however. Since $\hat{\Lambda}_{\ell(i)}$ is nondecreasing and $\Lambda_{\text{marg}}(\cdot \mid x_i)$ is continuous and nondecreasing, pointwise convergence on $[0,\tau]$ becomes uniform convergence in probability there, by P\'olya's argument applied along almost surely convergent subsequences \citep[cf.][Lemma 2.11]{vandervaart1998}. Then
\[
\bigl\lvert e_i - \Lambda_{\text{marg}}(T_i \mid x_i) \bigr\rvert \;\le\; \sup_{t \in [0,\tau]} \bigl\lvert \hat{\Lambda}_{\ell(i)}(t) - \Lambda_{\text{marg}}(t \mid x_i) \bigr\rvert \xrightarrow{\ p\ } 0
\]
whatever value $T_i$ takes. Hence $e_i - \Lambda_{\text{marg}}(T_i \mid x_i) \to 0$ in probability, and so $e_i$ and $\Lambda_{\text{marg}}(T_i \mid x_i) = \psi_{x_i}(\Lambda_{\text{clin}}(T_i \mid x_i))$ share a limiting distribution. Under the exogeneity condition the conditional frailty distribution does not depend on $x$, so $\psi_x = \psi$ is a single map, the same for every patient, and
\[
e_i \to \psi\bigl(\Lambda_{\text{clin}}(T_i \mid x_i)\bigr).
\]
For an uncensored patient $T_i = T_i^{*}$, so the argument is the variable in \eqref{eq:expclin}; for a censored patient it is the corresponding right-censored value.

\emph{Conditional independence.} The right-hand side is a fixed transform $\psi$ of an $\text{Exp}(z(s_i))$ variable. Both $\psi$ and the $\text{Exp}(z(s_i))$ distribution are free of $x_i$. Hence, conditional on $s_i$, the limiting distribution of $e_i$ does not depend on $x_i$.

\emph{Monotonicity.} The family $\text{Exp}(z(s_i))$ is stochastically decreasing in its rate $z(s_i)$, and $\psi$ is strictly increasing, so $\psi(\text{Exp}(z(s_i)))$ is stochastically decreasing in $z(s_i)$. The censoring indicator transfers because $t \mapsto \hat{\Lambda}_{\ell(i)}(t)$ is monotone increasing, so a right-censored time maps to a right-censored residual.
\end{proof}

\begin{remark}
\label{rem:consistency}
The hypothesis \eqref{eq:cons} is not automatic. Under the stopping rule of
Section~\ref{sec:growing}, which makes a node terminal when a split would leave a
child below a fixed size, the tree acquires more leaves of roughly constant size as
$n$ grows, so leaf-level estimates need not converge. Consistency presumes instead
a regime in which the partition stabilizes and leaf sample sizes increase with $n$.
We assume it rather than establish it, and return to it in
Section~\ref{sec:discussion}.
\end{remark}

Theorem~\ref{thm:residuals} makes no claim about the distributional form of the residuals. The following proposition quantifies the sense in which they are approximately exponential, under the weaker condition that only the \emph{mean} frailty is constant across covariate strata.

\begin{proposition}
\label{prop:expansion}
Assume the model \eqref{eq:frailty_model}, the consistency assumption of Theorem~\ref{thm:residuals}, and the mean-only condition that $\mathbb{E}[z(S) \mid X = x] = \bar{z}$ does not depend on $x$. Then
\begin{equation}
\label{eq:expansion}
\Lambda_{\text{marg}}(t \mid x) = \bar{z}\,\Lambda_{\text{clin}}(t \mid x) - \tfrac{1}{2}\mathrm{Var}\bigl(z(S) \mid X = x\bigr)\,\Lambda_{\text{clin}}(t \mid x)^2 + O\bigl(\Lambda_{\text{clin}}^3\bigr).
\end{equation}
Under the frailty normalization $\mathbb{E}[z(S)] = 1$, the mean-only condition gives $\bar{z} = 1$, so the leading term is exactly $\Lambda_{\text{clin}}(t \mid x)$ and is identical across covariate strata. Consequently, for small clinical cumulative hazard, the residual $e_i$ is exponential with rate $z(s_i)$ to first order. The leading correction is of order $\Lambda_{\text{clin}}^2$, with the covariate-dependent coefficient $\tfrac{1}{2}\mathrm{Var}(z(S) \mid X = x)$.
\end{proposition}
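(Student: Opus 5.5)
The plan is to expand the conditional Laplace exponent $\psi_x(a) = -\log \mathbb{E}[e^{-z(S)a} \mid X = x]$, introduced in the proof of Theorem~\ref{thm:residuals}, around $a=0$ and then substitute $a = \Lambda_{\text{clin}}(t \mid x)$. Since $\Lambda_{\text{marg}}(t\mid x) = \psi_x(\Lambda_{\text{clin}}(t\mid x))$ by \eqref{eq:marg}, the expansion \eqref{eq:expansion} is exactly the second-order Taylor expansion of $\psi_x$. The first two Taylor coefficients are the first two cumulants of $z(S)\mid X=x$, up to sign.

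\textbf{Step 1: moment expansion.} Write $M_x(a) = \mathbb{E}[e^{-z(S)a}\mid X=x]$. Assuming $\mathbb{E}[z(S)^3\mid X=x]$ is finite, Taylor's theorem with remainder applied to $e^{-za}$ gives
\[
M_x(a) = 1 - m_1 a + \tfrac12 m_2 a^2 + R_x(a), \qquad |R_x(a)| \le \tfrac16 \mathbb{E}[z^3 \mid x]\, a^3 .
\]
Here $m_k = \mathbb{E}[z(S)^k\mid X=x]$, and the remainder bound uses $|e^{-u} - 1 + u - u^2/2| \le u^3/6$ for $u \ge 0$.

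\textbf{Step 2: take the logarithm.} Apply $-\log(1+y) = -y + y^2/2 + O(y^3)$ with $y = -m_1 a + \tfrac12 m_2 a^2 + O(a^3)$. This gives
\[
\psi_x(a) = m_1 a - \tfrac12 (m_2 - m_1^2) a^2 + O(a^3) = m_1 a - \tfrac12 \mathrm{Var}(z(S)\mid X=x)\, a^2 + O(a^3).
\]

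\textbf{Step 3: apply the assumptions.} The mean-only condition sets $m_1 = \bar z$. Substituting $a=\Lambda_{\text{clin}}(t\mid x)$ yields \eqref{eq:expansion}. Under the normalization $\mathbb{E}[z(S)]=1$, the tower property gives $\bar z = \mathbb{E}[\mathbb{E}[z(S)\mid X]] = 1$. The leading term is then $\Lambda_{\text{clin}}(t\mid x)$ in every stratum.

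\textbf{Step 4: the statement about residuals.} Reuse the uniform-convergence step from the proof of Theorem~\ref{thm:residuals}, which did not use exogeneity. It gives $e_i - \psi_{x_i}(\Lambda_{\text{clin}}(T_i\mid x_i)) \to 0$ in probability. By Step 3, $\psi_{x_i}(a) = a + O(a^2)$, so to first order $e_i \approx \Lambda_{\text{clin}}(T_i\mid x_i)$. By \eqref{eq:expclin}, this quantity is $\text{Exp}(z(s_i))$ for uncensored patients. The leading correction is $-\tfrac12\mathrm{Var}(z(S)\mid X=x_i)\,\Lambda_{\text{clin}}^2$, and its coefficient is the only place where $x_i$ still enters at second order.

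\textbf{Main obstacle: uniformity of the $O(\Lambda_{\text{clin}}^3)$ term.} The constant depends on $x$ through $\mathbb{E}[z^3\mid X=x]$ and through the requirement that $|y|<1$ in the log expansion. I would first assume conditional third moments bounded uniformly in $x$. Then the implicit constant is uniform for $\Lambda_{\text{clin}}$ in a fixed neighbourhood of $0$, and the statement is read as an asymptotic expansion as $\Lambda_{\text{clin}}\to 0$. If that moment assumption is too strong, I would fall back to a pointwise-in-$x$ statement. I would also state explicitly that "exponential to first order" refers to this small-$\Lambda_{\text{clin}}$ regime, not to an exact distributional claim.
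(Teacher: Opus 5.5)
Your proposal is correct and follows essentially the paper's route: a second-order Taylor expansion of the conditional Laplace exponent $\psi_x$, whose coefficients are the first two conditional cumulants. This is followed by the mean-only substitution and, for the residual claim, the same three-fact composition: the uniform-convergence step from Theorem~\ref{thm:residuals}, the expansion, and \eqref{eq:expclin}. The only difference is minor: you expand $M_x$ with an explicit third-moment remainder and then compose with the series for $-\log(1+y)$, whereas the paper differentiates $\psi_x=-\log M$ twice at the origin, so your version avoids differentiating under the expectation and makes the dependence of the $O(\Lambda_{\text{clin}}^3)$ constant on $x$ explicit.
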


The proof is a Taylor expansion of the Laplace exponent and is deferred to \ref{app:expansion}.

\begin{remark}
\label{rem:estimand}
Theorem~\ref{thm:residuals} delivers the two properties the spatial tree uses,
conditional independence and monotonicity, without asserting exact
exponentiality. Proposition~\ref{prop:expansion} makes precise the sense in which
the residuals are approximately $\text{Exp}(z(s_i))$: the approximation is first
order in $\Lambda_{\text{clin}}$, and the exact exponential is recovered only when
the within-stratum frailty variance vanishes. Exact exponentiality aids interpretation but is not needed for the spatial
stage, which uses only the ordering.
\end{remark}

\begin{remark}
\label{rem:exogeneity}
The exogeneity condition sharpens what the residuals recover but is not required for the procedure to be well defined. Without any condition on the dependence between $S$ and $X$, the clinical tree
still targets $\Lambda_{\text{marg}}(t \mid x)$ of \eqref{eq:marg}, so the
residuals isolate the spatial effect net of clinical adjustment. Exogeneity is precisely the case in which that contrast coincides with the pure
frailty $z(s)$. When it is violated (the clinical
covariates predict the frailty), the component of risk that is simultaneously
clinical and spatial is not separately identifiable, and the residuals attribute it to the spatial term.
This is a property of the estimand rather than a deficiency of the procedure. Smoothing-based and frailty-based spatial survival models likewise cannot
decompose a risk gradient collinear in clinical and spatial coordinates.
\end{remark}

\begin{remark}
\label{rem:general}
Neither Theorem~\ref{thm:residuals} nor Proposition~\ref{prop:expansion} depends
on the two stages being trees. The only hypothesis either result places on the
estimator is that the clinical stage consistently estimates the
covariate-conditional cumulative hazard $\Lambda_{\text{marg}}(t \mid x)$ from
$x$ alone. The expansion \eqref{eq:expansion} is a property of the conditional
frailty distribution and involves no estimator at all. Any first-stage method
returning a monotone estimate of $\Lambda(t \mid x)$ may therefore be
substituted, and any second-stage method that accepts right-censored outcomes.
What makes this possible is the Cox-Snell form of the residual
\eqref{eq:residuals}: because $e_i$ is a monotone transformation of the observed
time, it is itself a right-censored survival outcome. Nor is the construction
specific to space. Here $s$ may be any variable whose effect is to be isolated, with
the exogeneity condition applied to it in place of location. We use kernel
dipole trees at both stages because the clinical stage can then accommodate
non-proportional hazards and the spatial stage can represent sharp boundaries.
These suit the present application rather than being required by the theory.
\end{remark}

The procedure can also be viewed as a survival analogue of regression kriging \citep{hengl2007}, a standard technique in geostatistics for handling spatially correlated data with covariates. In regression kriging, one fits a regression on covariates, computes residuals, and then fits a spatial model on the residuals. Our approach is the survival counterpart. The multiplicative frailty model becomes additive on the log-cumulative-hazard scale, since
\[
\log \Lambda_{\text{true}}(t \mid x, s) = \log z(s) + \log \Lambda_{\text{clin}}(t \mid x),
\]
and removing the clinical effect from the cumulative hazard isolates $\log z(s)$. The key difference from standard regression kriging is that the censoring structure is preserved. Censored observations remain censored in the spatial stage, where the tree's internal Kaplan-Meier estimates handle them.

\subsection{Limitations and Extensions}

Three limitations of the two-stage procedure deserve mention. First, Theorem \ref{thm:residuals} assumes the clinical tree consistently estimates $\Lambda_{\text{clin}}(t \mid x)$. If the clinical tree is misspecified or undersized, residual clinical signal can leak into the spatial stage and contaminate the spatial analysis. In practice, we recommend growing the clinical tree to a moderate depth and using bootstrap pruning to balance bias and variance.

Second, the multiplicative frailty assumption in \eqref{eq:frailty_model} may be violated if the spatial effect changes the shape of the hazard rather than scaling it. The spatial tree is nonparametric, however, so it can still detect systematic spatial differences in the residuals, even though Theorem~\ref{thm:residuals} no longer applies.

Third, the residuals recover the pure spatial frailty only under the exogeneity
condition of Theorem~\ref{thm:residuals}, a case Remark~\ref{rem:exogeneity}
treats in full. Section~\ref{sec:sim_robustness} locates this limit empirically, and
covariate-side adjustment in the strongly correlated regime is left to future
work.

\section{Application to LeukSurv Data}
\label{sec:leuksurv}

\subsection{Data Description}
\label{sec:data}

We apply the method to the LeukSurv dataset, which records survival for $n = 1{,}043$ cases of acute myeloid leukemia in adults, diagnosed between 1982 and 1998 and registered in northwest England. The dataset was assembled and first analyzed by \citet{henderson2002} and is distributed with the \texttt{spBayesSurv} package \citep{zhou2020spbayessurv}. For each patient it records the survival time in days and a right-censoring
indicator. The four clinical covariates are age, sex, white blood cell count at
diagnosis (\textsc{wbc}), and the Townsend deprivation index (\textsc{tpi}) of the
patient's residential area, higher values indicating greater deprivation. Spatial
information is recorded at two resolutions: the administrative district (one of
$24$) and the exact residential coordinates. We use the exact coordinates.

The clinical covariates are the established prognostic factors for this disease. Older age, higher white blood cell count, and greater deprivation are each associated with shorter survival, while the effect of sex is weak. The spatial question, following \citet{henderson2002}, is whether survival varies geographically after these clinical effects are accounted for. The Townsend index is itself spatially clustered, with $22\%$ of its variance lying between the $24$ administrative districts, so the raw ingredient of clinical--spatial confounding is present and the exogeneity condition of Theorem~\ref{thm:residuals} is not guaranteed a priori. In practice, the two-stage procedure removes the deprivation-aligned component effectively. The cumulative-hazard residuals that enter the spatial stage retain almost none of the \textsc{tpi} signal (Pearson $r = -0.063$, Spearman $\rho = -0.068$), and the recovered spatial risk is essentially uncorrelated with \textsc{tpi} (Pearson $r = -0.012$, Spearman $\rho = 0.056$). This is what the exogeneity condition would produce, though the condition concerns the unobserved frailty and cannot be verified directly. The stronger evidence is calibration against known truth. The correlation between \textsc{tpi} and the recovered risk matches the zero-confounding base case of the robustness sweep of Section~\ref{sec:sim_robustness} and falls below every level at which confounding is imposed. The recovered map can accordingly be read as spatial risk net of clinical adjustment.

\subsection{Benchmark Spatial Models}
\label{sec:benchmark}

\paragraph{Henderson et al.\ (2002)} \citet{henderson2002} analyzed these data with Bayesian proportional-hazards
gamma frailty models carrying a Mat\'ern spatial correlation, at two resolutions: a district-level model over the $24$ administrative districts, and a point-level model on individual residential locations. Their substantive conclusion is that survival varies geographically after
clinical adjustment. The posterior median proportion of individual frailty
variance attributable to the spatial component is $0.24$, though the $95\%$
interval runs from $0.04$ to $0.67$. Age and white
blood cell count are clearly prognostic, deprivation is borderline, and sex is not
significant once the spatial term is included. The recovered frailty surface is smooth by
construction, and it is the qualitative target our discrete map is compared
against.

\paragraph{spBayesSurv proportional-hazards Gaussian random field model} As our primary quantitative benchmark we fit the proportional-hazards model with a Gaussian random field spatial frailty from the \texttt{spBayesSurv} package \citep{zhou2020spbayessurv}, using the four clinical covariates and the exact patient coordinates. This is the modern analogue of the Henderson point-level model, a smooth Gaussian-process frailty on the linear predictor, with a flexible transformed Bernstein polynomial baseline in place of the profiled Breslow estimator used there. The model was fit by Markov chain Monte Carlo with a full-scale covariance approximation. Posterior summaries for the regression coefficients and the frailty variance, together with model-fit statistics (LPML, DIC, WAIC), are reported in Table~\ref{tab:spbayes}, and the posterior-mean frailty surface is shown alongside our discrete map in Figure~\ref{fig:comparison}.

\begin{table}[t]
\centering
\small
\setlength{\tabcolsep}{5pt}
\caption{Posterior summaries from the \texttt{spBayesSurv} PH + GRF fit to
LeukSurv: regression coefficients and frailty variance ($n = 1{,}043$; 4{,}000
burn-in, 4{,}000 retained MCMC iterations, 150 full-scale-approximation knots).
Model-fit criteria are given for this fit and for the same model without the
spatial term, which is identical in all other respects.}
\label{tab:spbayes}
\begin{tabular}{lcccc}
\toprule
Parameter & Mean & Median & 95\% CI low & 95\% CI high \\
\midrule
Age              & $0.032$ & $0.032$ & $0.028$ & $0.037$ \\
Sex              & $0.064$ & $0.063$ & $-0.068$ & $0.195$ \\
\textsc{wbc}     & $0.0032$ & $0.0032$ & $0.0022$ & $0.0040$ \\
\textsc{tpi}     & $0.030$ & $0.030$ & $0.010$ & $0.048$ \\
Frailty variance & $0.067$ & $0.059$ & $0.032$ & $0.140$ \\
\midrule
\multicolumn{5}{l}{\emph{Model fit} \hfill LPML \hfill DIC \hfill WAIC \hfill} \\
\multicolumn{5}{l}{PH + GRF \hfill $-5938.2$ \hfill $11870.1$ \hfill $11876.0$ \hfill} \\
\multicolumn{5}{l}{PH, no spatial term \hfill $-5950.6$ \hfill $11898.7$ \hfill $11901.2$ \hfill} \\
\bottomrule
\end{tabular}
\end{table}

The fitted coefficients reproduce the established clinical story: age, white blood cell count, and deprivation are prognostic in the expected direction, and sex is not significant. The benchmark therefore recovers the same clinical effects as \citet{henderson2002}. That the spatial component is doing real work is established by comparison with
the same model fit without
it. Including the Gaussian random field improves LPML
by $12.4$ and lowers DIC and WAIC by $28.6$ and $25.2$ respectively
(Table~\ref{tab:spbayes}). The clinical coefficients are stable across the two
fits, and \textsc{tpi} in particular is essentially unchanged ($0.030$ against
$0.029$), so the spatial term is not absorbing the deprivation effect. The
recovered surface is smooth, risk varying as a continuous gradient with no sharp
boundaries.

\subsection{Our Results}
\label{sec:ourresults}

We applied the two-stage procedure to the LeukSurv data. The clinical stage fit a linear-kernel survival tree to the four clinical covariates (age, sex, \textsc{wbc}, \textsc{tpi}), producing a partition into clinically homogeneous leaves from which the Nelson-Aalen cumulative hazard residuals \eqref{eq:residuals} were computed. The spatial stage fit a survival tree to these residuals using only the geographic coordinates, with the mixture of Gaussian and polynomial kernels of Section~\ref{sec:criterion}, balanced so that the resulting partition is spatially coherent without over-fragmenting. The spatial tree was pruned by the split-complexity method of Section~\ref{sec:growing} at a fixed complexity parameter, applied identically to the spatial-only comparison below so that the two differ only in the clinical adjustment. The clinical tree was left unpruned. The pruned spatial tree has $15$ leaves of $12$ to $223$ patients.

Figure~\ref{fig:comparison} sets the study region and its districts (a) alongside three analyses of the same data on a common centered-log-risk scale: the smooth \texttt{spBayesSurv} posterior frailty surface (b), our discrete two-stage tree risk map (c), and, for contrast, a \emph{spatial-only} tree fit to the raw survival times using the coordinates alone, with no clinical adjustment (d). We first compare the smooth benchmark (b) and the two-stage tree (c), which agree on the spatial structure. Both return a low-risk basin in the west and a low region in the southeast, and both identify three separated elevated regions: a northern lobe, a pocket toward the center-east, and a band across the southwest. This reproduces the geographic risk variation reported by \citet{henderson2002}, whose point-level analysis of the same data resolves two low-frailty clusters, in the west and the southeast, and elevated frailty toward the north, the southwest and the east. Our map recovers each of these, resolving one eastern lobe where theirs resolves two. That both methods independently recover the same separated high-risk regions, not merely a single gradient, is evidence that the discrete method captures the genuine spatial signal rather than an artifact of the partitioning. The two differ on where risk is lowest. The smooth surface places its minimum in the western basin, whereas the tree assigns that basin only a mild reduction and reserves its extreme low values for small pockets nearer the center of the region, one of which adjoins the elevated center-east lobe.

They differ in how they represent that signal. The \texttt{spBayesSurv} surface varies as a continuous gradient, while the tree map is piecewise constant, and where the underlying risk changes abruptly the two diverge. The tree resolves sharp local features that the smooth surface averages into intermediate values, most visibly high- and low-risk pockets lying close together in the center and lower right. The difference is quantitative as well as visual, in that the two-stage map recovers a $5$th-to-$95$th-percentile span of $1.56$ in centered log risk against the smooth surface's $0.56$. In the simulations, where the truth is known, gaps of this size arise because the smooth benchmark recovers roughly $40\%$ of the true frailty contrast while the tree recovers the full contrast, if anything slightly sharpening it (Section~\ref{sec:sim_exogeneity}).

The spatial stage achieves a concordance of $0.57$ on the residual scale, computed in sample over comparable pairs assigned to different leaves. Even a perfectly recovered map cannot achieve high concordance on this scale. Under Theorem~\ref{thm:residuals} the residuals inherit the ordering of $\text{Exp}(z(s))$ variables, so two patients in regions whose frailties differ by a factor $r$ are ordered correctly with probability only $r/(1+r)$. The \texttt{spBayesSurv} frailty variance (Table~\ref{tab:spbayes}) implies a typical ratio of about $1.3$ between two locations, and hence a bound near $0.57$, so the observed value is close to what the benchmark's estimate of the spatial effect permits. Both the smooth and discrete methods agree that the spatial signal, while modest in magnitude, is genuine and geographically organized.

The role of the clinical adjustment is made concrete by the spatial-only map in Figure~\ref{fig:comparison}(d). Fit to the raw survival times without any clinical stage, it diverges from both the smooth benchmark and the two-stage map. It assigns elevated risk to a broad band across the south of the study region containing $553$ of the $1{,}043$ patients. The two-stage map elevates only the western end of that band and returns the remainder to baseline, assigning the band a mean centered log risk of $-0.02$ and placing $69\%$ of its patients below average. The elevation is at least partly clinical in origin. Patients in the band have higher white blood cell counts (median $10.1$ versus $5.2$; Mann-Whitney $p = 7 \times 10^{-6}$) and greater deprivation (median Townsend index $0.50$ versus $-1.33$; $p = 2 \times 10^{-9}$) than patients elsewhere, while being slightly younger (median $63$ versus $67$ years; $p = 0.003$). Age is the strongest prognostic factor in these data and runs in the favorable direction across the band, so the elevated unadjusted risk there cannot be an age effect. It tracks the two covariates that are both prognostic and spatially organized. An analysis that does not adjust for them misreads clinically-driven variation as geographic risk. This is the real-data counterpart of Section~\ref{sec:sim_exogeneity}.

\begin{figure}[htbp]
\centering
\includegraphics[width=\textwidth]{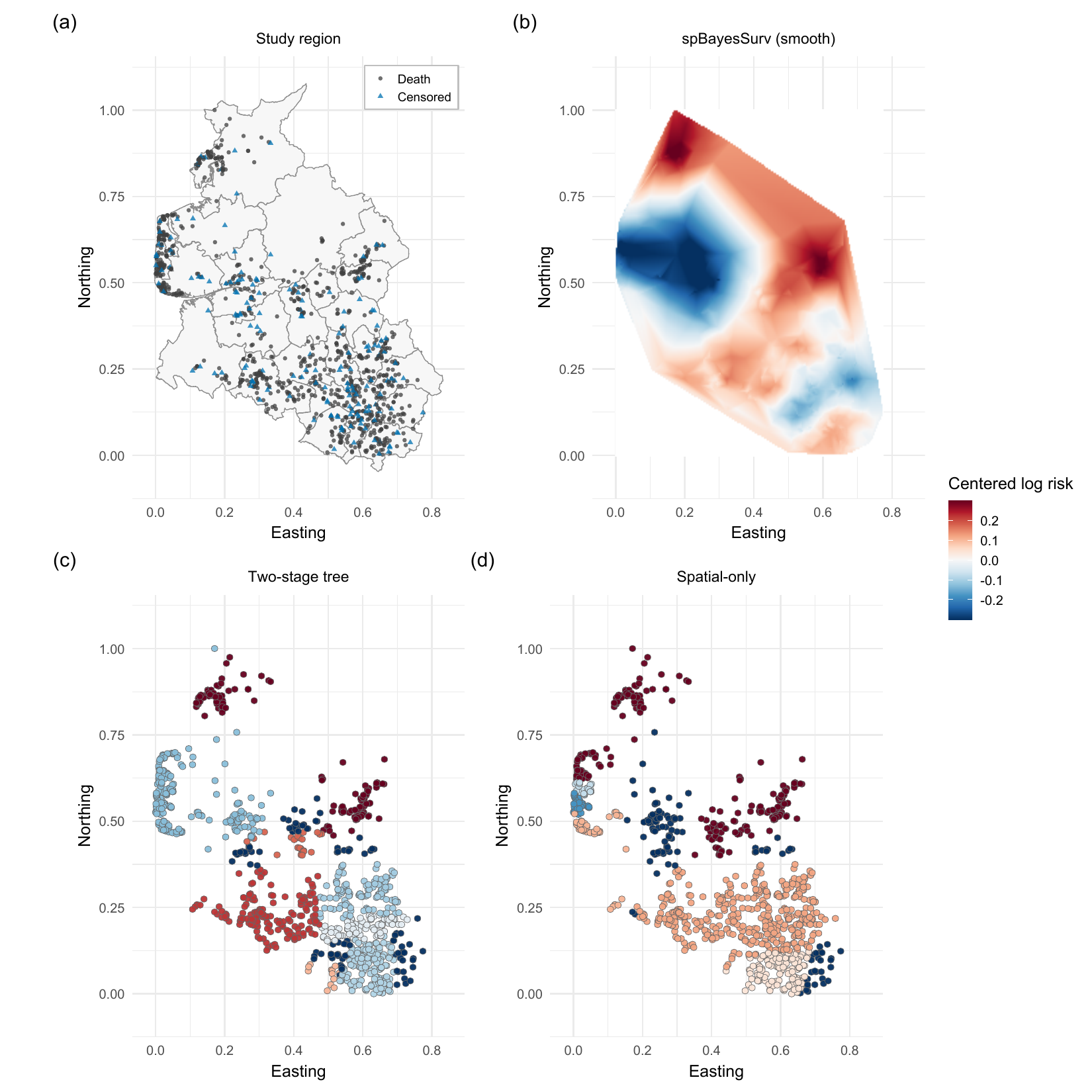}
\caption{LeukSurv: (a) the study region, its $24$ administrative districts and the $1{,}043$ patient locations, deaths as circles and censored observations as triangles; and spatial risk on a common centered-log-risk scale from (b) the smooth posterior frailty surface of the \texttt{spBayesSurv} PH + GRF model, (c) the discrete map from the two-stage kernel dipole tree, and (d) a spatial-only tree fit to the raw survival times with no clinical adjustment.}
\label{fig:comparison}
\end{figure}

\section{Simulation Studies}
\label{sec:simulations}
 
We assess the method on simulated data in which the true spatial structure is known. All designs share a common generative model. The spatial frailty $z(s)$ takes discrete values on a sharp partition of the study region, and the observation coordinates are the real LeukSurv locations, so the spatial support is irregular and realistic rather than a regular grid. Survival times follow the multiplicative frailty model $\lambda(t \mid x, s) = z(s)\,\lambda_{\mathrm{clin}}(t \mid x)$ of Section~\ref{sec:theory}, with an exponential clinical baseline and a crossing-hazard age effect whose log hazard ratio changes sign over follow-up. The age effect therefore violates the proportional-hazards assumption while remaining consistent with Theorem~\ref{thm:residuals}, which places no restriction on the form of $\lambda_{\mathrm{clin}}$. Censoring is set to a moderate rate broadly comparable to that of LeukSurv. We compare three analyses on every design, the spatial-only dipole tree, the two-stage method, and the \texttt{spBayesSurv} PH model with a Gaussian random field frailty, all as in Section~\ref{sec:leuksurv}. The two spatial trees are grown and pruned identically, so they differ only in the clinical adjustment. Because the true zones are known, we report spatial recovery directly. Four
metrics are given for every design: zone-recovery accuracy, the adjusted Rand
index (ARI), the classification accuracy among the half of the observations
farthest from the true boundary (far-boundary accuracy), and the standardized mean squared
error (MSE) between the recovered risk and the true frailty, both centered and
scaled to unit variance. We also report the span
ratio, the $5$th-to-$95$th-percentile span of the recovered risk as a fraction of
the same span of the true frailty. One is exact recovery of the contrast, below
one is shrinkage, and above one is the sharpening that hard leaf assignments
produce. The area under the ROC curve (AUC) for the recovered risk against
the true zone is reported for the two-zone designs.
 
\subsection{The Degenerate Case of Full Independence}
\label{sec:sim_degenerate}
The first design starts at the lower end of the exogeneity ladder. The frailty is a single sharp split into a high-risk and a low-risk zone, and all clinical covariates are generated independently of location, so the strong condition $S \perp X$ holds in full and not merely the weaker condition $z(S) \perp X$ of Theorem~\ref{thm:residuals}. The covariates therefore carry no spatial information, a spatial-only analysis is not biased by them, and the clinical stage has no confounding to remove, so the two tree analyses should coincide. Table~\ref{tab:sim_recovery} reports the recovery metrics for this and the designs that follow. The two-stage and spatial-only analyses recover the two zones comparably and nearly perfectly, with accuracy $0.975$ and $0.987$ and neither making any error away from the boundary. The small edge of the spatial-only analysis is the estimation variance introduced by the clinical stage, which here removes nothing because the covariates are spatially uninformative. The two agree exactly on standardized MSE. They differ sharply in amplitude, however. The two-stage method recovers a span ratio of $0.99$, essentially the true contrast, while the spatial-only tree returns $2.21$, its predicted times carrying the clinical variation as well as the frailty. Recovering the partition and recovering the contrast are separate achievements, and the clinical stage secures the second even where it is not needed for the first. The \texttt{spBayesSurv} benchmark trails the two tree methods on the partition, with an adjusted Rand index of $0.549$ against $0.947$ and $0.903$, even though its risk score ranks the zones nearly as well, with an area under the curve of $0.981$ against $0.986$ and $0.997$. That pattern is the signature of a smooth frailty surface that cannot resolve a sharp boundary. The benchmark is further limited here by the crossing-hazard age effect, which a proportional-hazards model cannot represent.

Figure~\ref{fig:sim_degenerate} shows the recovered risk for each method beside the simulated total risk, in which the clinical variation is scattered without spatial pattern. The degenerate case is a correctness check. The clinical stage does no harm where it is not needed. Figure~\ref{fig:tpi_contrast} contrasts \texttt{tpi} here with the designs that follow, in which it carries spatial structure and the clinical stage becomes essential.

\begin{figure}[t]
\centering
\includegraphics[width=0.9\textwidth]{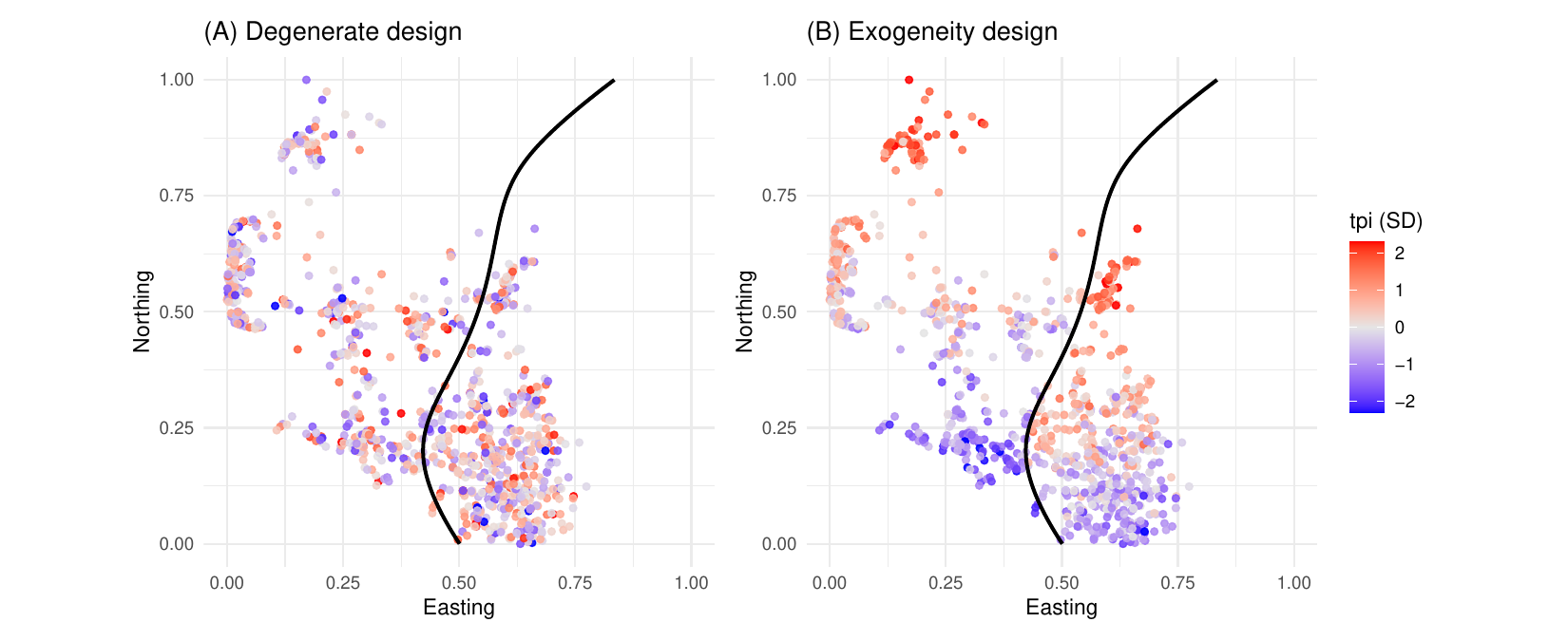}
\caption{The covariate \texttt{tpi} by location, in standard deviation units on a common scale. (A) degenerate design, unstructured; (B) exogeneity design, a north-south gradient that crosses the frailty boundary, overlaid in black, while remaining mean-balanced across the two zones.}
\label{fig:tpi_contrast}
\end{figure}

\begin{figure}[htbp]
\centering
\includegraphics[width=0.9\textwidth]{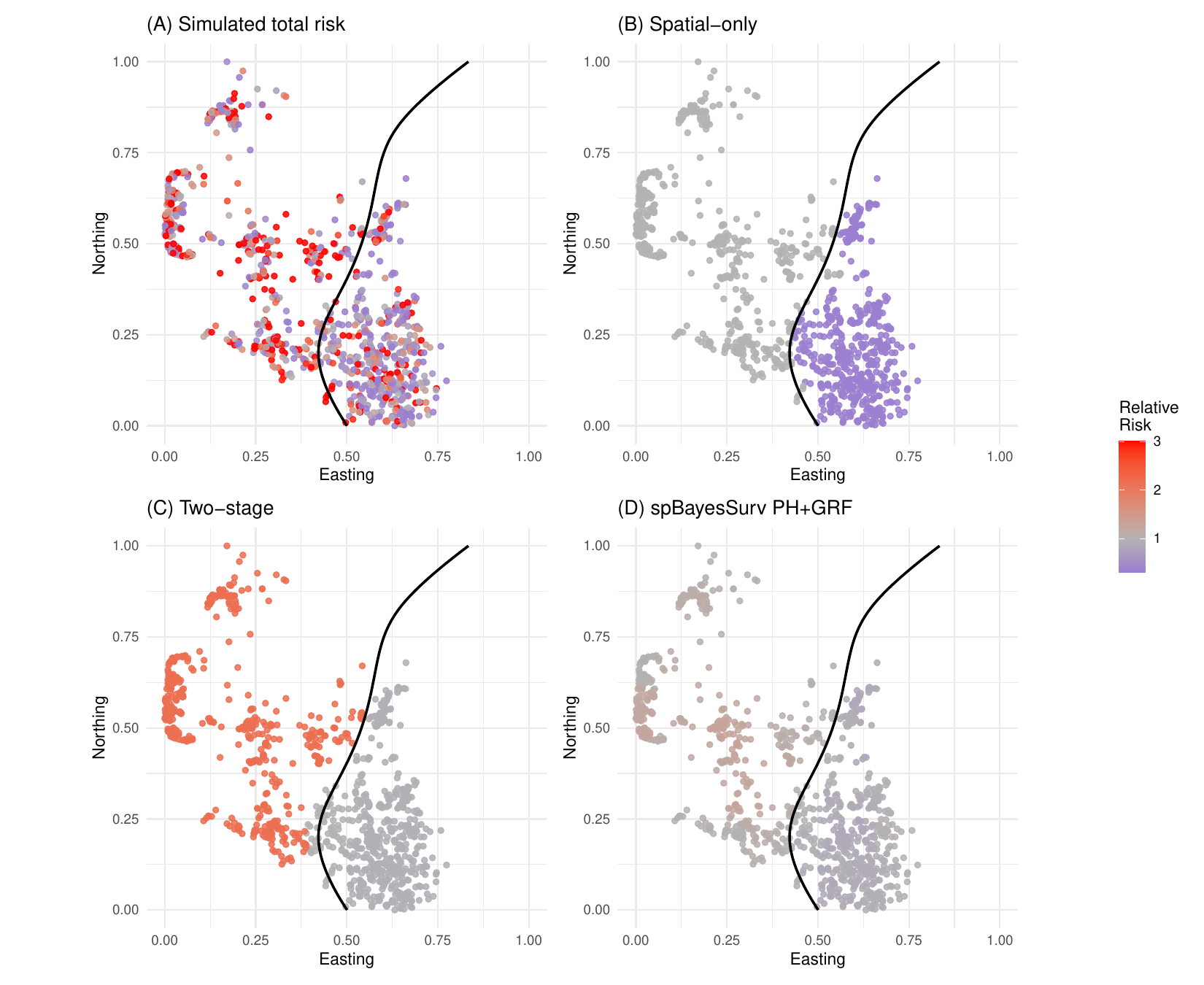}
\caption{Recovered spatial risk in the degenerate design, all panels on a common relative-risk scale. (A) simulated total risk, (B) spatial-only tree, (C) two-stage method, (D) \texttt{spBayesSurv} PH+GRF. The true boundary is overlaid in black, and the benchmark is drawn as per-subject points so that all methods are compared on the same basis.}
\label{fig:sim_degenerate}
\end{figure}

\begin{table}[t]
\centering
\footnotesize
\setlength{\tabcolsep}{4pt}
\caption{Zone-recovery metrics for the four designs. Higher is better for every
column except the standardized MSE, for which lower is better; the span ratio is
relative to the true frailty span, so one is exact recovery of the contrast. AUC
requires a single ROC curve and is reported only where the partition is binary.
Frailty levels are in a uniform ratio of two in the two- and three-zone designs
and $2.5$ in the four-zone design.}
\label{tab:sim_recovery}
\begin{tabular}{llcccccc}
\hline
Design & Method & Acc.\ & Far-bdry.\ & ARI & Std.\ MSE & Span & AUC \\
\hline
Degenerate  & Spatial-only                & \textbf{0.987} & \textbf{1.000} & \textbf{0.947} & \textbf{0.085} & 2.21 & 0.986 \\
$S \perp X$ & Two-stage                   & 0.975 & \textbf{1.000} & 0.903 & \textbf{0.085} & \textbf{0.99} & \textbf{0.997} \\
            & \texttt{spBayesSurv}        & 0.871 & 0.854 & 0.549 & 0.387 & 0.44 & 0.981 \\
\hline
Two-zone    & Spatial-only                & 0.711 & 0.871 & 0.178 & 0.853 & 4.60 & 0.838 \\
            & Two-stage                   & \textbf{0.956} & \textbf{1.000} & \textbf{0.831} & \textbf{0.107} & \textbf{0.79} & \textbf{0.993} \\
            & \texttt{spBayesSurv}        & 0.927 & 0.960 & 0.730 & 0.345 & 0.80 & 0.975 \\
\hline
Three-zone  & Spatial-only                & 0.334 & 0.177 & 0.104 & 1.079 & 2.55 & --- \\
            & Two-stage                   & \textbf{0.784} & \textbf{0.929} & \textbf{0.547} & 0.409 & \textbf{0.79} & --- \\
            & \texttt{spBayesSurv}        & 0.780 & 0.777 & 0.445 & \textbf{0.354} & 0.70 & --- \\
\hline
Four-zone   & Spatial-only                & 0.440 & 0.447 & 0.261 & 0.965 & 2.03 & --- \\
            & Two-stage                   & \textbf{0.882} & \textbf{0.988} & \textbf{0.726} & \textbf{0.368} & \textbf{0.64} & --- \\
            & \texttt{spBayesSurv}        & 0.741 & 0.864 & 0.440 & 0.410 & 0.44 & --- \\
\hline
\end{tabular}
\end{table}

\subsection{Recovery under the Exogeneity Condition}
\label{sec:sim_exogeneity}
The second design places covariates in the regime the method is built for, the regime in which the mean-balance condition of Proposition~\ref{prop:expansion} holds but the clinical stage is not redundant. The clinical covariate \texttt{tpi} is generated as a spatial field, correlated with location, but is centered within each frailty zone. Location and covariates are thus dependent, $S \not\perp X$, while centering delivers mean balance exactly. The full conditional independence of Theorem~\ref{thm:residuals} is approached but not imposed, since the within-zone spread of \texttt{tpi} still differs slightly between zones. Because \texttt{tpi} is spatially structured and prognostic, a spatial-only tree attributes its variation to location and misplaces the recovered boundary, while the two-stage method removes the clinical effect first.

We demonstrate this across increasing spatial complexity, with two, three, and
four zones sharing a single construction. The frailty takes levels in a uniform
ratio between adjacent zones, two in the two- and three-zone designs and $2.5$ in
the four-zone design, so within each design every adjacent pair carries the same
separability and the zone count is a complexity increase rather than a change in
contrast. The boundaries are nested splits of the study region, a parabolic
division of west from east followed, in the three- and four-zone designs, by
further division of the eastern part. Table~\ref{tab:sim_recovery} reports the recovery
metrics and Figure~\ref{fig:sim_maps} the corresponding maps, with the true
boundaries overlaid. The ordering is the same in all three designs. The two-stage method
recovers the partition best, making no error at all away from the boundary in the
two-zone case; the smooth benchmark returns a blurred version, limited both by
the crossing-hazard misspecification and by its inability to represent a sharp
step; and the unadjusted spatial-only analysis is displaced by the clinical
confounding, in each design by an elevated region in the north driven by the
north-south \texttt{tpi} gradient rather than by frailty. In the three-zone design
its accuracy of $0.334$ falls below the $0.50$ a map that resolves neither extreme
would attain, so it does not merely fail to recover the zones but misassigns
them. Absolute recovery falls
as zones are added, reflecting the difficulty of resolving more adjacent zones
rather than any change in method performance.

The mechanism is visible in the residuals themselves, before the spatial stage is fit. Theorem~\ref{thm:residuals} gives residuals ordered by frailty and free of the clinical covariates when its exogeneity condition holds, and Proposition~\ref{prop:expansion} predicts that under mean balance alone a small second-order clinical term remains. Table~\ref{tab:sim_residuals} and Figure~\ref{fig:sim_residuals} confirm both. The residuals are stochastically ordered by zone, with a Mann-Whitney test decisively separating them, and within each zone they separate vertically by frailty while remaining nearly flat against \texttt{tpi}. The removal is substantial but not exact. A small residual dependence survives and is largest in the highest-frailty zone of each design, as Proposition~\ref{prop:expansion} leads one to expect, since the second-order remainder is bounded by the within-stratum frailty variance. The magnitude stays small throughout. In the three-zone design the clinical covariates explain $0.8\%$, $1.6\%$ and $3.0\%$ of the residual variance within the low-, medium- and high-frailty zones, and the correlation between the residuals and northing does not exceed $0.07$ in any zone.

\begin{table}[t]
\centering
\small
\setlength{\tabcolsep}{5pt}
\caption{Residual diagnostics for the two-zone design, computed after the clinical stage and before the spatial stage.}
\label{tab:sim_residuals}
\begin{tabular}{lcc}
\hline
Diagnostic & High zone & Low zone \\
\hline
Mean residual & 0.611 & 1.049 \\
Median residual & 0.467 & 0.835 \\
Correlation with northing & $0.023$ & $0.015$ \\
Clinical gradient $R^2$ (\texttt{wbc}, \texttt{tpi}) & 0.001 & 0.008 \\
\hline
\multicolumn{3}{l}{Mann-Whitney test of zone ordering (high $<$ low): $p = 1.5\times10^{-16}$} \\
\hline
\end{tabular}
\end{table}

\begin{figure}[t]
\centering
\includegraphics[width=\textwidth]{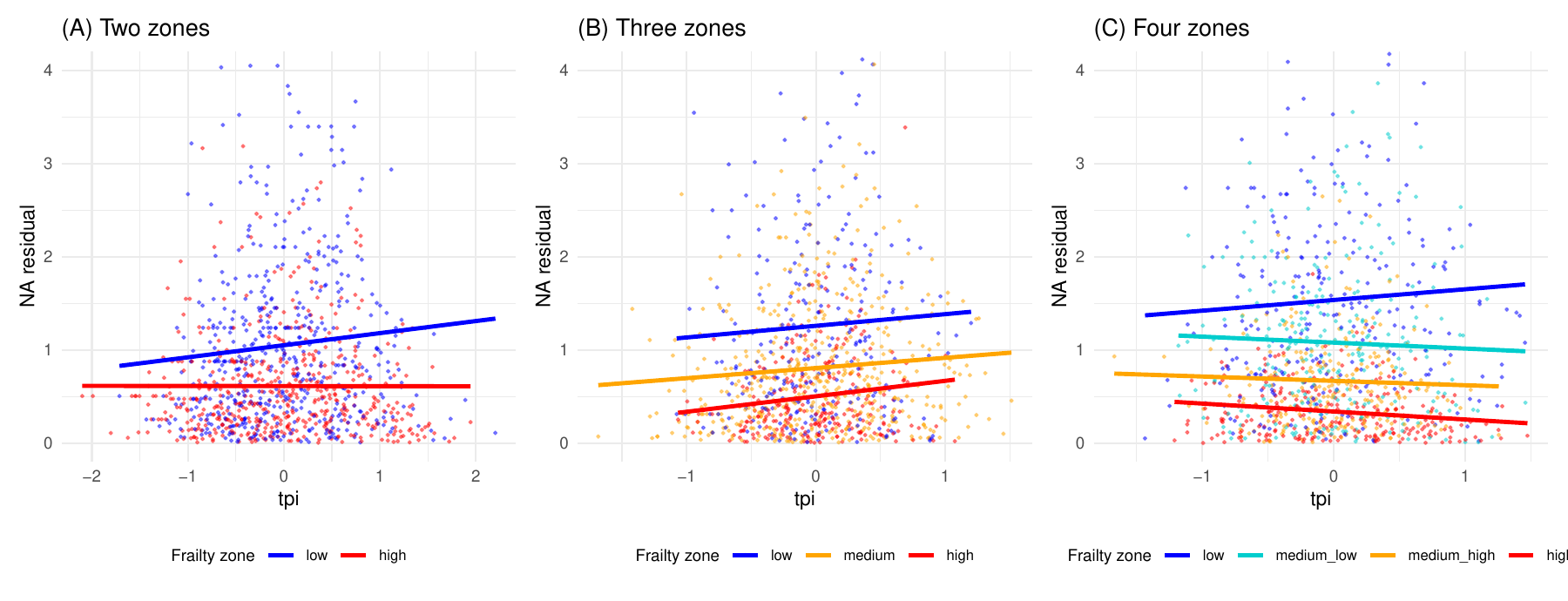}
\caption{Nelson-Aalen residuals against \texttt{tpi} by frailty zone, after the clinical stage, for the two-zone (A), three-zone (B) and four-zone (C) designs on a common residual scale. Fitted lines separate vertically by frailty and are nearly flat against \texttt{tpi} in every zone, the clinical effect having been largely removed; what residual dependence remains is concentrated in the highest-frailty zone, where the second-order term of Proposition~\ref{prop:expansion} is largest.}
\label{fig:sim_residuals}
\end{figure}

\begin{figure}[htbp]
\centering
\includegraphics[width=\textwidth]{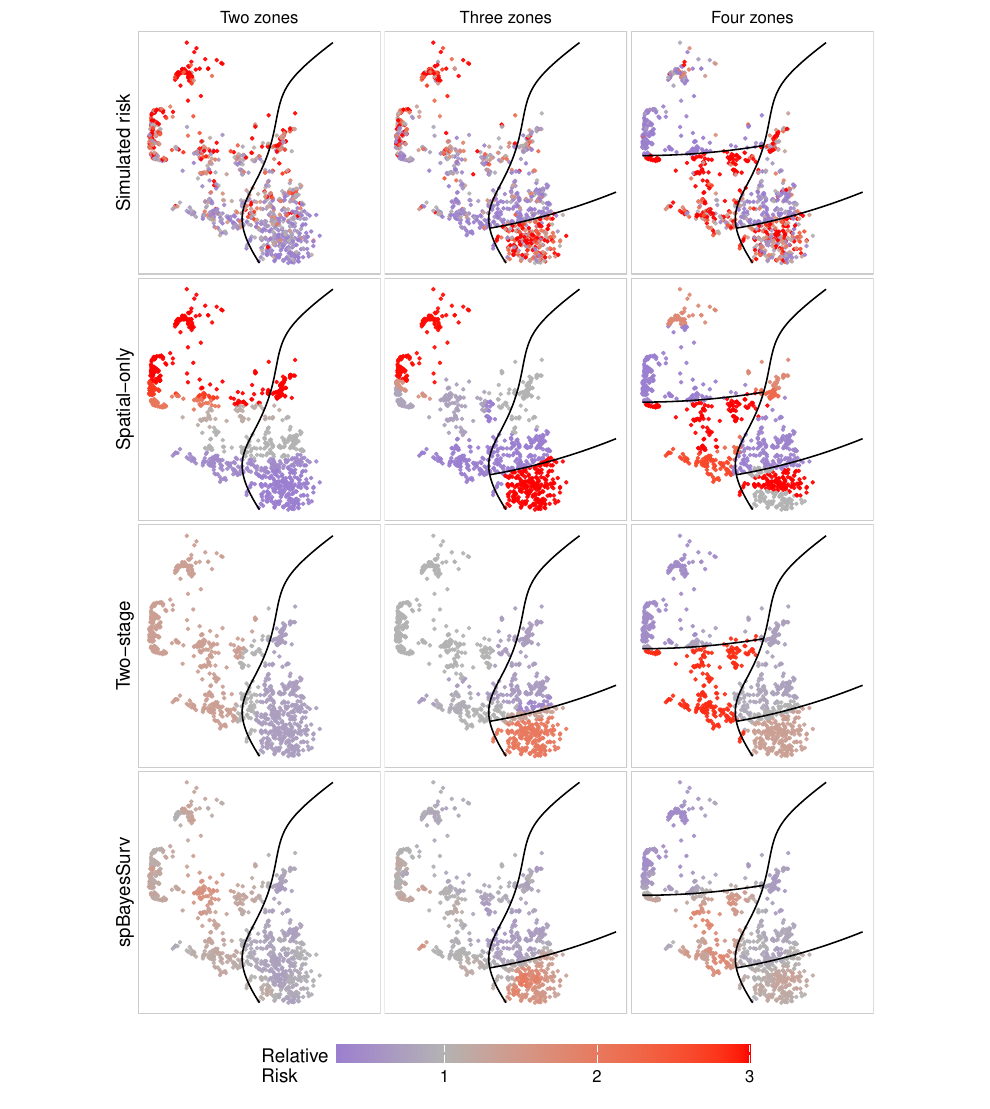}
\caption{Recovered spatial risk under the exogeneity condition, on a common
relative-risk scale. Columns are the two-, three- and four-zone designs; rows are
the simulated total risk and the three analyses. True zone boundaries are
overlaid in black. Axes are the study region as in Figure~\ref{fig:comparison}
and are suppressed for space.}
\label{fig:sim_maps}
\end{figure}

The maps in Figure~\ref{fig:sim_maps} make the difference in kind directly visible. Every panel shares one relative-risk scale, so the muted appearance of the \texttt{spBayesSurv} row reflects genuine shrinkage of the smooth surface rather than a difference in plotting. The two-stage row holds the low-risk regions and resolves the sharp inter-zone steps. The spatial-only row produces comparable contrast but misplaces the boundary, bleeding risk across the true edge. The benchmark row is compressed toward the mean, increasingly so from left to right as zones are added. Against the true frailty span the two-stage tree recovers $0.79$ of the contrast in the two-zone design and $0.64$ in the four-zone, the benchmark $0.80$ and $0.44$. The standardized MSE is blind to this, since both maps are scaled to unit variance before comparison, so the benchmark stays competitive there while losing decisively on the partition metrics.

\subsection{Robustness under Violation of Mean Balance}
\label{sec:sim_robustness}
The final study probes what happens as mean balance fails. Starting from the three-zone design of the previous section, in which the condition holds and the two-stage method recovers the partition well, we introduce a graded violation by coupling the clinical covariate \texttt{tpi} to membership in the high-frailty zone. As the coupling strengthens, high-zone patients receive a systematically elevated \texttt{tpi}, so \texttt{tpi} progressively predicts high-zone membership and the mean-balance condition of Proposition~\ref{prop:expansion} fails. The violation is localized. \texttt{tpi} is shifted upward in the high zone only, leaving the other two untouched, so its effect on recovery can be read cleanly against the confounding strength. The shift also raises the total risk in that zone, since \texttt{tpi} is prognostic, so the strongest coupling makes the confounded zone both harder to attribute and more extreme.

We sweep the coupling from zero, where the exogeneity condition holds, through
the range in which it fails, taking eight levels of the realized correlation
between \texttt{tpi} and high-zone membership, from $0$ to $0.735$, with $30$
independent replicate datasets at each level. All three analyses are fit at every
level, giving $480$ fits in total. At zero confounding the
design is exactly that of Section~\ref{sec:sim_exogeneity}, so the sweep's first
level is thirty replicates of the three-zone design. We record the adjusted Rand
index and accuracy for the whole three-zone partition, a local detection score
for the confounded high zone alone, and the \texttt{tpi} contamination of each
recovered risk map.

Figure~\ref{fig:sim_sweep} shows the result. On the global partition
(panels A and B) the two-stage method holds a plateau through moderate
confounding, with an adjusted Rand index of $0.602$ at zero coupling and no
decline out to a realized correlation of $0.31$, after which it falls steadily
to $0.275$. Accuracy follows the same shape, from $0.767$ to $0.469$. The
\texttt{spBayesSurv} benchmark traces the same trajectory from a lower start,
$0.360$ falling to $0.257$. The spatial-only analysis is flat throughout,
between $0.122$ and $0.154$.

That the two clinically adjusted methods decline together while the unadjusted
one does not is an important observation. It is what the dissolution of the
estimand predicts and a failure of the tree does not. As \texttt{tpi} comes to
determine high-zone membership, the spatial and clinical contributions to that
zone cease to be separately identifiable, and any method that removes the
covariate effect must lose the zone with it. The two-stage method retains its
advantage over the unadjusted analysis on the adjusted Rand index at every
level, from fifteen standard errors across the plateau to six at the strongest
coupling, and over the smooth benchmark from seven standard errors down to
under one. Where the estimand survives, resolving the partition pays; where it
dissolves, the three methods converge.

Panel~(B) marks the majority-zone baseline. The three zones occupy $25\%$,
$50\%$ and $25\%$ of the sample with the majority zone the middle one, so a map
that resolves neither extreme scores $0.50$. Spatial-only accuracy is
significantly \emph{below} that baseline at six of the eight levels and
indistinguishable from it at the other two. The unadjusted map does not merely
fail to recover zone membership, it misassigns it. Two-stage accuracy is far
above the baseline everywhere except the strongest coupling, where it reaches
$0.469$ and is no longer distinguishable from it ($p = 0.13$). The smooth
benchmark stays above the baseline throughout, ending at $0.644$. Its
reluctance to commit to a partition costs it under exogeneity and protects it
once the partition is no longer identifiable.

On the confounded zone alone the ordering reverses. At zero coupling the
two-stage method detects it best, with an area under the curve of $0.946$
against the unadjusted analysis's $0.821$. The curves cross at a realized
correlation of $0.27$, and beyond it the unadjusted analysis leads by a widening
margin, reaching $0.986$ at the strongest coupling while its recovered partition
has an adjusted Rand index of only $0.154$. It identifies the zone by retaining
the very confounding a valid analysis must remove, and it is helped by the
coupling raising that zone's total risk as well as its \texttt{tpi} content. No
method that respects the exogeneity condition can match it there, and the
\texttt{spBayesSurv} curve confirms as much, tracking the two-stage method down
to $0.838$ rather than following the unadjusted analysis up.

Panel~(D) makes the mechanism explicit by reporting how much \texttt{tpi}
signal reaches each recovered risk map. The spatial-only map's correlation with
\texttt{tpi} rises from $0.407$ to $0.752$. At the strongest level its recovered
``spatial'' risk surface is very nearly a map of \texttt{tpi}, which is why its
hotspot detection in panel~(C) scores so highly. The two-stage map stays between $0.055$
and $0.216$, the smooth benchmark between $-0.004$ and $0.248$, and the
residuals entering the spatial stage retain a \texttt{tpi} correlation within
$\pm 0.06$ at every level, so the clinical stage continues to remove the
covariate effect as Theorem~\ref{thm:residuals} requires even where the
theorem's exogeneity condition no longer holds. What degrades is therefore not
the residual transformation but the identifiability of the target.

The practical import for the application is that LeukSurv sits at the clean end
of this range. Its measured \texttt{tpi}/recovered-risk correlation of $0.056$
(Section~\ref{sec:leuksurv}) matches the sweep's zero-confounding base case,
where the same quantity is $0.055$, so the application lies in the regime where
the exogeneity condition holds to good approximation and the two-stage method is
unambiguously preferable. The sweep establishes the margin. The method tolerates
confounding up to a realized correlation of about $0.3$ before its advantage
begins to erode, and the application sits at the base-case level.

\begin{figure}[htbp]
\centering
\includegraphics[width=0.9\textwidth]{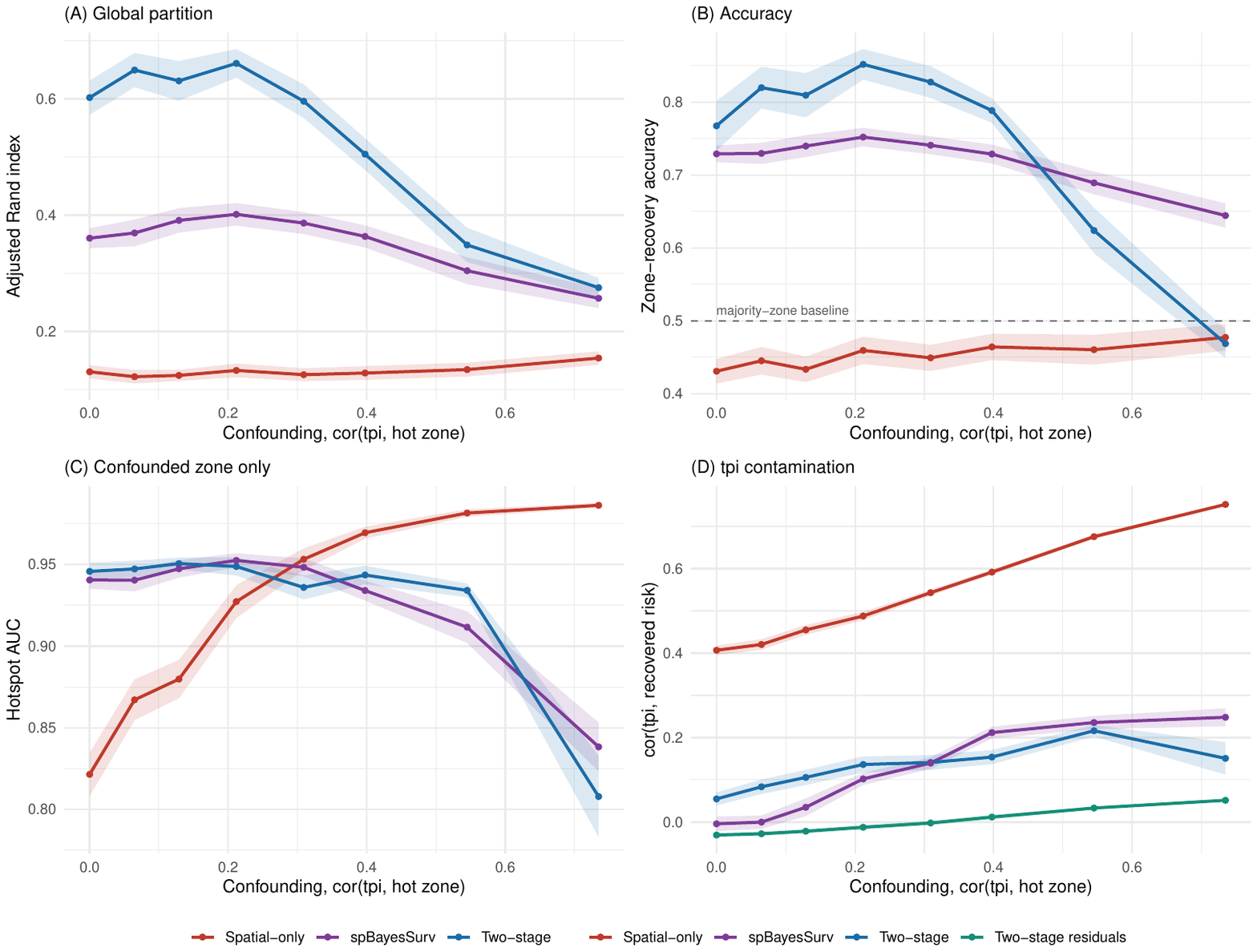}
\caption{Robustness sweep: recovery as mean balance fails, against the realized
strength of \texttt{tpi} coupling to the high-frailty zone. Curves are means over
$30$ replicates per level with $\pm1$ standard-error bands. (A) adjusted Rand
index and (B) accuracy for the full three-zone partition, with the majority-zone
baseline of $0.50$ marked; (C) detection score for the confounded high zone
alone; (D) Spearman correlation between \texttt{tpi} and each method's recovered
risk map, with the \texttt{tpi} correlation of the two-stage residuals
overlaid.}
\label{fig:sim_sweep}
\end{figure}

\section{Discussion}
\label{sec:discussion}

\subsection{Summary of Contributions}

We have developed a nonparametric method for recovering spatial variation in
survival risk after adjustment for clinical covariates, built on kernel
dipole-splitting survival trees and organized around a two-stage residual
transformation. The method makes no proportional-hazards assumption on the
clinical effects, imposes no smoothness on the spatial effect, and returns a
discrete spatial risk map whose boundaries may be curved but whose risk
assignments are hard. Its theoretical basis is
Theorem~\ref{thm:residuals}, which gives the two properties the spatial stage
requires, and Proposition~\ref{prop:expansion}, whose predicted second-order
remainder is visible in the simulated residuals.

Two findings from the empirical work bear emphasis beyond the recovery results
themselves. First, the comparison against an unadjusted spatial-only tree, in both
the LeukSurv application and every simulation design, shows that omitting the
clinical stage does not merely lose precision but produces a qualitatively wrong
map, attributing to location a risk gradient that belongs to the case mix of the
patients living there. Second, the robustness sweep locates the boundary of the
method's validity rather than asserting it. Performance is flat through moderate
confounding and then declines, and the smooth benchmark declines with it while the
unadjusted analysis stays flat and low. What fails is the separate identifiability
of the spatial and clinical contributions, not the estimator.

One caution about evaluation follows from both. A threshold-free ranking statistic
cannot distinguish recovery of a spatial effect from recovery of a covariate that
happens to be spatially coincident with it, as the degenerate design and the
robustness sweep each illustrate. Where the object of inference is a partition,
partition metrics should be reported alongside it.

The method is suited to settings in which risk plausibly changes abruptly and the
location of the change is itself of interest. Care delivered through catchment
areas, screening and treatment protocols set at the level of a trust or a
district, and environmental exposures that stop at a river or a former industrial
boundary all produce risk surfaces a smooth frailty cannot represent. The form of
the output matters as much as its accuracy in such
settings. A discrete map names
regions, and a region is something a health authority can investigate or act on,
whereas a gradient is not. The method requires that the clinical drivers of spatial variation are measured,
since the first stage can only remove what it observes. It also asks something of
how those covariates are arranged in space. Theorem~\ref{thm:residuals} holds when
they are identically distributed across regions of equal frailty;
Proposition~\ref{prop:expansion} asks only that the average frailty be the same in
every covariate stratum. Read geometrically, both are conditions on direction. A
covariate whose gradient runs \emph{along} a risk boundary satisfies them, while
one whose gradient runs \emph{across} it does not. That distinction is often
plausible in practice, because administrative and catchment boundaries are drawn
for historical and organizational reasons rather than to follow socioeconomic
structure, so there is no particular reason for a deprivation gradient to align
with them. Neither condition need hold closely. In the sweep the method loses
nothing until the covariate's correlation with zone membership reaches about
$0.3$.

Relative to the spatial frailty models of \citet{henderson2002}, \citet{li2002}
and \citet{banerjee2003}, and to their modern implementation in
\texttt{spBayesSurv} \citep{zhou2020spbayessurv}, the difference is one of
representation rather than of intent. Both estimate a spatial risk surface after
clinical adjustment, and on LeukSurv they agree about where risk is high and low.
The agreement is what gives us confidence that the discrete map is not an artifact
of partitioning. The same contrast applies to geoadditive and spline-smoothed
proportional-hazards models, in which the spatial term is a penalized smooth rather
than a correlated random
effect. The parameterization differs, but a discontinuity
can still be approximated only by concentrating the smooth's curvature, and how
abruptly it may change is set by the smoothing penalty.

Two limitations relative to the benchmarks should be acknowledged. Our map is a
point estimate with no accompanying interval, where the Bayesian frailty models
return posterior uncertainty for the spatial surface. A single pruned tree is
also a high-variance object, as our own sweep shows, the adjusted Rand index of
the two-stage method having a standard deviation of about $0.16$ across $30$
replicates of an identical design. Both stem from using a single pruned tree
rather than from the two-stage construction, and an ensemble aggregated at the
level of the partition would address them together, as we discuss below.

\subsection{Future Work}

The most direct extension concerns how the spatial and clinical effects are
separated. The present method residualizes the \emph{response}, removing the
clinical contribution from the cumulative hazard before the spatial tree is fit.
An alternative is to residualize the \emph{predictor}, orthogonalizing the
spatial kernel against the clinical covariates in the reproducing kernel Hilbert
space, a kernel analogue of the Frisch-Waugh-Lovell theorem. This acts on the
quadratic form of the dual problem rather than on the outcome, so it composes with
the residual transformation rather than replacing it, and the combination of
response- and predictor-side adjustment may separate the two effects more sharply
than either alone. A third possibility exploits a feature of the dipole criterion
itself, the pair weights $\alpha_{jk}$ of Section~\ref{sec:criterion}, set to unity in
\citet{maung2025}, could instead be chosen to depend on the clinical covariates, concentrating the
criterion on comparisons between clinically similar observations.

Ensemble methods are a natural response to the variance of a single tree, and
prior work has applied them successfully to simpler dipolar trees \citep{kretowska2007, kretowska2014}. The spatial
setting calls for an asymmetric treatment, however, because averaging risk
surfaces across bootstrap replicates would smooth the very boundaries the method
exists to resolve. The clinical tree is not itself reported, its only output
being the residual, so bagging it costs nothing interpretively while stabilizing
the input to the spatial stage, and it strengthens rather than weakens the
consistency hypothesis of Theorem~\ref{thm:residuals}. The spatial tree can
instead be aggregated at the level of the partition rather than the risk, by
cutting a co-occurrence matrix of leaf co-membership across replicates, which
averages away sampling variability while returning a discrete map. Such an
ensemble would also supply what the present method lacks and the Bayesian
benchmarks provide: uncertainty quantification, in the form of bootstrap
confidence in zone membership, and hence a sharp boundary reported together with
its uncertainty.

Several further directions remain. Remark~\ref{rem:consistency} assumes rather than establishes the consistency of
the clinical stage. Whether the split-complexity pruning of
Section~\ref{sec:growing} induces a regime in which it holds is an open problem. Our evaluation is in-sample and
oriented toward spatial recovery, as is standard for spatial frailty models.
Cross-validated prediction on held-out data would situate the method within the
predictive survival literature instead. Finally, the strongly confounded regime invites explicit treatment. Since no
method can separate contributions the data cannot distinguish, a diagnostic that
warns when a dataset has entered that regime would be more useful than one
claiming to recover the decomposition anyway.

\section*{Declaration of competing interest}

The authors declare that they have no known competing financial interests or
personal relationships that could have appeared to influence the work reported in
this paper.

\section*{Data availability}

The LeukSurv data of \citet{henderson2002} and the northwest England district
boundaries are distributed with the \texttt{spBayesSurv} package
\citep{zhou2020spbayessurv}, which also provides the benchmark spatial survival
models fit here. Survival tree models were coded as R6 classes using the
\texttt{data.tree} package \citep{glur2022}, with the kernelized dual problem
optimized by \texttt{osqp} \citep{stellato2020}. The node-splitting survival tree of \citet{maung2025} that both
stages build on, the code for the two-stage procedure and the simulation designs,
and the scripts that reproduce every table and figure are available at
\url{https://github.com/DrewLazar/SpatialSurvTrees}.

\appendix

\section{Proof of Proposition~\ref{prop:expansion}}
\label{app:expansion}

Recall from the proof of Theorem~\ref{thm:residuals} that
\[
\Lambda_{\text{marg}}(t \mid x) = \psi_x\bigl(\Lambda_{\text{clin}}(t \mid x)\bigr), \qquad
\psi_x(a) = -\log \mathbb{E}_{S \mid X = x}\bigl[e^{-z(S)\,a}\bigr],
\]
where $\psi_x$ is the Laplace exponent of the conditional frailty distribution $z(S) \mid X = x$, equal to the negated cumulant generating function evaluated at $-a$. Its Taylor coefficients about the origin are therefore the cumulants of that distribution up to sign, which we verify directly. Let $M(a) = \mathbb{E}[e^{-z(S)\,a} \mid X = x]$, so that $\psi_x = -\log M$ and
\[
\psi_x'(a) = -\frac{M'(a)}{M(a)}, \qquad
\psi_x''(a) = -\frac{M''(a)}{M(a)} + \left(\frac{M'(a)}{M(a)}\right)^2.
\]
At $a = 0$ we have $M(0) = 1$, $M'(0) = -\mathbb{E}[z(S) \mid x]$ and $M''(0) = \mathbb{E}[z(S)^2 \mid x]$, giving
\[
\psi_x(0) = 0, \qquad \psi_x'(0) = \mathbb{E}[z(S) \mid x],
\]
\[
\psi_x''(0) = \mathbb{E}[z(S) \mid x]^2 - \mathbb{E}[z(S)^2 \mid x] = -\mathrm{Var}(z(S) \mid x),
\]
the first two cumulants of the conditional frailty distribution. Assuming $z(S)$ has a finite third moment given $X = x$, a second-order Taylor expansion of $\psi_x$ about $a = 0$ therefore yields
\[
\psi_x(a) = \mathbb{E}[z(S) \mid x]\,a - \tfrac{1}{2}\mathrm{Var}(z(S) \mid x)\,a^2 + O(a^3),
\]
and substituting the mean-only condition $\mathbb{E}[z(S) \mid x] = \bar{z}$ and $a = \Lambda_{\text{clin}}(t \mid x)$ gives \eqref{eq:expansion}. The leading term $\bar{z}\,\Lambda_{\text{clin}}$ is free of $x$ by the mean-only condition; the order-$\Lambda_{\text{clin}}^2$ coefficient $\tfrac{1}{2}\mathrm{Var}(z(S) \mid x)$ is not controlled by that condition and varies with $x$ in general. With $\bar{z} = 1$, the stated consequence for the residuals, that $e_i \sim \text{Exp}(z(s_i))$ to first order, follows by composing three facts, of which the first and third are established in the proof of Theorem~\ref{thm:residuals}: $e_i$ and $\Lambda_{\text{marg}}(T_i \mid x_i)$ share a limiting distribution; $\Lambda_{\text{marg}}(T_i \mid x_i) = \Lambda_{\text{clin}}(T_i \mid x_i) + O(\Lambda_{\text{clin}}^2)$ by \eqref{eq:expansion}; and $\Lambda_{\text{clin}}(T_i \mid x_i) \sim \text{Exp}(z(s_i))$ by \eqref{eq:expclin}. \qed

\bibliographystyle{elsarticle-harv}
\bibliography{references}

\end{document}